\definecolor{trueblue}{rgb}{0.0, 0.45, 0.81}
\definecolor{truegreen}{rgb}{0.13, 0.55, 0.13}
\newcommand{\BBB}{\color{black}}
\newcommand{\EEE}{\color{black}}
\newcommand{\eps}{\varepsilon}
\theoremstyle{plain}
\newtheorem{theorem}{Theorem}[section]
\newtheorem{lemma}[theorem]{Lemma}
\newenvironment{step}[1]{\underline{Step #1}.}{}
\theoremstyle{definition}
\newtheorem{definition}[theorem]{Definition}
\numberwithin{equation}{section}
\newcommand{\N}{\mathbb{N}}
\newcommand{\Z}{\mathbb{Z}}
\newcommand{\R}{\mathbb{R}}
\newcommand{\x}{{\times}}
\begin{document}

\title[A proof of finite crystallization via strataficition]{A proof of finite crystallization via stratification}

\author[M. Friedrich]{Manuel Friedrich} 
\address[Manuel Friedrich]{Department of Mathematics, Friedrich-Alexander Universit\"at Erlangen-N\"urnberg. Cauerstr.~11,
D-91058 Erlangen, Germany, \& Mathematics M\"{u}nster,  
University of M\"{u}nster, Einsteinstr.~62, D-48149 M\"{u}nster, Germany}
\email{manuel.friedrich@fau.de}

\BBB

\author[L.~Kreutz]{Leonard Kreutz}
\address[Leonard Kreutz]{Zentrum Mathematik - M7, Technische Universit\"at M\"unchen, Garching, Germany}
\email{leonard.kreutz@.tum.de}

\EEE


\begin{abstract} We devise a new technique to prove two-dimensional crystallization results in the square lattice for finite particle systems. We apply this strategy to energy minimizers of configurational energies featuring two-body short-ranged particle interactions and three-body angular potentials favoring bond-angles of the square lattice. To each  configuration, we  associate its bond graph which is then suitably modified by identifying chains of successive atoms. This method, called \emph{stratification},  reduces  the crystallization problem to a simple minimization  that corresponds to a proof via slicing of the isoperimetric inequality in $\ell^1$. As a byproduct, we also prove a fluctuation estimate  for minimizers of the configurational energy, known as the  $n^{3/4}$-law.  

\end{abstract}

\subjclass[2010]{}
\keywords{Crystallization, square lattice, atomic interaction potentials, stratification, edge
isoperimetric inequality}

\maketitle

\section{Introduction}

At low temperature, atoms and molecules typically arrange themselves into crystalline order. Tackling this phenomenon by using  mathematical models consists in proving or disproving that ground states  of  particle systems for certain configurational energies with interatomic interactions exhibit crystalline order. This issue, referred to  as the {crystallization problem} \cite{Blanc},  has attracted  a great deal of attention in the physics and mathematics community. By now,  various mathematically rigorous crystallization results are available both for systems with a fixed, \emph{finite} number of atoms, and in  the so-called \emph{thermodynamic limit} dealing with the infinite particle limit. The reader is referred  to  \cite{Blanc, Friesecke-Theil15} for a
general overview and also to \cite{Mainini-Piovano} for a detailed account of available results.   The goal of this paper is to revisit the problem of finite crystallization in dimension two, and to present a novel and substantially different proof strategy.

We consider a model where configurations are identified  with the respective positions of  atoms  $\lbrace x_1, \ldots, x_n \rbrace$ in the plane with an associated configurational energy $\mathcal{E}( \lbrace x_1,\ldots,x_n \rbrace )$  
comprising classical interaction potentials. More specifically, $\mathcal{E} = \mathcal{E}_2 + \mathcal{E}_3$ decomposes into $\mathcal{E}_2$ and $\mathcal{E}_3$  describing two- and three-body interactions, respectively.  The two-body interaction potential $\mathcal{E}_2$ is short-ranged
and attractive-repulsive favoring atoms sitting at some specific reference distance. For $\mathcal{E}_3 \equiv 0$ and for a specific choice of $\mathcal{E}_2$, namely the so-called \emph{sticky disc potential}, crystallization in the triangular lattice has been proved by {\sc Heitmann \& Radin}   \cite{HR} (see also \cite{Radin, Wagner83} for generalizations) and recently revisited in  \cite{Lucia}, via an approach from discrete differential geometry. If instead \BBB $\mathcal{E}_3\neq 0$, under specific quantitative assumptions, \EEE optimal geometries can be identified as the square or the hexagonal lattice \cite{Mainini-Piovano, Mainini},  depending on whether $\mathcal{E}_3$ favors triples of particles forming angles which are multiples of $\frac{\pi}{2}$ or $\frac{2\pi}{3}$, respectively.   Besides crystallization, fine characterizations of ground-state geometries are available by proving the emergence of hexagonal or  square macroscopic Wulff shapes for growing particle numbers \cite{Yuen, Davoli15, Davoli16, FriedrichKreutzSchmidt}. We also refer to related rigorous crystallization results for particle systems involving  different types of atoms \cite{Betermin, B43, B195, FriedrichKreutzHexagonal, FriedrichKreutzSquare, FriedrichStefanelli}  and to     \cite{Gardner,Hamrick,Radi,Ventevogel} for a nonexhaustive list of results in dimension one.

Although the exact realization of the proof of each result is different depending on the used potentials and the underlying optimal geometry, all proofs follow the very same strategy, originally devised in  \cite{Harborth, HR}. \BBB First of all, due to range of the two-body interaction, one can naturally associate a planar graph to the configuration where vertices and edges correspond to particles and bonds, respectively. \EEE The graph is then separated into the boundary and bulk atoms. The  \emph{boundary energy} (roughly, the number of bonds at the boundary of the configuration) is carefully estimated by geometric arguments involving the angles between atoms and relying on the sum of interior angles in planar polygons. Moreover, by means of  Euler's formula for planar graphs a connection between the number of bonds and atoms in the configuration is derived. Then, the essential idea of the proof lies in an induction argument over the number of particles: one removes a \emph{bond graph layer}, i.e., the boundary atoms of the configuration, and by induction hypothesis one uses information of the remaining configuration consisting of less atoms. The approach in  \cite{Lucia} is different in the sense that it endows the bond graph with a suitable notion of discrete combinatorial curvature and uses a discrete version of the Gauss-Bonnet theorem from differential geometry. However, it still vitally hinges on specific geometric arguments and the   induction method over bond graph layers. 

It appears to be challenging to generalize this strategy  to problems beyond the setting described above. On the one hand, it is hardly conceivable to extend the delicate estimates on the boundary  energy to particle systems in three dimensions where surfaces have a much richer structure. On the other hand, the induction method over bond graph layers is often not flexible enough to handle more general situations such as particles systems with two types of atoms with prescribed ratio since this ratio might not be preserved by removing a bond graph layer.  
 
In this work, we propose a new strategy to tackle finite crystallization problems which does not use the induction method over bond graph layers and comes along without arguments from the theory of planar graphs and discrete  differential geometry  such as Euler's formula or Gauss-Bonnet. It relies on an idea that we call \emph{stratification}. In this paper, we present our technique  for the model by {\sc Mainini, Piovano, \& Stefanelli} \cite{Mainini-Piovano} and reprove finite crystallization in the square lattice, see Theorem \ref{thm:crystallization}. We are confident, however, that the strategy carries over to other lattices as well, such as the triangular \cite{Lucia, HR} and the hexagonal \cite{Mainini} lattice. 

As observed in \cite{Mainini-Piovano}, ground states correspond to configurations minimizing a specific edge perimeter of the configuration, essentially counting the number of missing bonds of atoms having less than four bonds. For ground-state competitors,  the bond graph can be locally interpreted as a deformed version of $\Z^2$, apart from possible defects in the lattice, see Definition~\ref{def:epsregular}. Therefore, we can identify chains of atoms in the bond graph where the angle between three successive atoms is near to $\pi$, called \emph{strata}. Strata can be \emph{open}, where the first and the last atom  of the stratum lie at the boundary of the configuration, or  they can be \emph{closed} forming a closed cycle. In contrast to open strata, closed strata do not contribute to the edge perimeter. Therefore, for a correct estimate, we aim at excluding the existence of closed strata. To this end, we observe that, due to the cycle structure of closed strata, there need to exist angles deviating from $\pi$ and thus contributing to the  three-body energy $\mathcal{E}_3$. Given specific quantitative assumptions on the potentials similar to the ones in  \cite{Mainini-Piovano}, the contribution of $\mathcal{E}_3$ is large enough to allow us to erase a bond from the stratum to turn it into an open stratum. This procedure is made precise in Lemma \ref{lemma:construction} and referred to as \emph{stratification}. Once all strata are open, the graph satisfies specific properties (see Lemmas \ref{lemma:opengraph} and \ref{lemma:excess}) which  reduce our crystallization problem  to a simple argument related to an edge isoperimetric inequality on the square lattice. (Compare to \cite{Mainini-Piovano} for a problem on $\Z^2$, and see also \cite{Bollobas, Harary} for  some related classical issues in Discrete Mathematics.)

In contrast to uniqueness  of Wulff shapes for continuum crystalline isoperimetric problems, minimizers for a finite number of particles $n$ are in general not unique. For different lattices in 2D, it has been shown that there are arbitrarily large $n$  with ground-state configurations
deviating from the hexagonal or  square macroscopic Wulff shape by a number of $n^{3/4}$-particles  \cite{Davoli15, Davoli16, Mainini-Piovano, Schmidt-disk}. Later, this analysis as been extended to the cubic lattice in higher dimensions  \cite{Mainini-Piovano-schmidt, edo-bernd}.  The proof of such maximal asymptotic deviation, also known as \emph{maximal  fluctuation estimate}, relies on careful rearrangement techniques for atoms at the boundary and edge-isoperimetric inequalities. In our setting of the square lattice, we can immediately reobtain this so-called $n^{3/4}$-law as a mere byproduct of our crystallization proof,  see Theorem \ref{th: 34law}.  Our argument is similar to \cite{Mainini-Piovano} with the interesting difference however that our strategy  can be applied even if configurations are \textit{not} subset of $\Z^2$. We also mention the complementary approach  \cite{cicalese}, yet restricted to subsets of periodic lattices, where maximal  fluctuation estimates are derived via a quantitative version of the edge isoperimetric inequality, based on the
quantitative version of the anisotropic isoperimetric inequality proved in \cite{Figalli-Maggi-Pratelli}. 

One goal of our work is to revisit finite crystallization results and to suggest a substantially different proof strategy which does not use the induction method over bond graph layers and comes along without arguments from the theory of planar graphs and discrete  differential geometry. Besides providing, to our view, a simpler and more direct proof of known results, our main motivation is that our techniques seem promising to tackle more challenging crystallization problems. For example, we expect that our approach can contribute to understand  finite  crystallization   in three dimensions  or crystallization for double-bubble problems  \cite{Duncan0, Duncan, gorny} (configuration with two types of atoms).


Let us highlight that our proof strategy is tailor-made for the problem of finite crystallization. Concerning   the thermodynamic limit,  i.e., as the number of particles tends to infinity, other techniques are used  and allow to prove results under less restrictive assumptions on the potentials. We refer to \cite{Betermin0,ELi, Smereka15, Theil}   for results in the plane and to some few available rigorous results \cite{Flateley1,Flateley2} in three dimensions.

The article is organized as follows. In Section~\ref{sec:setting} we introduce our setting and state the main results.   Section~\ref{sec:mainproof} is devoted to the concept of stratification and in Section \ref{sec: main} we prove our main results. We close the introduction with basic notation. The Euclidian distance between a point $x \in \R^d$ and a set $A \subset \R^d$ is denoted by ${\rm dist}(A,x)$. By $\# A$ we denote the cardinality of a set $A$. By $B_r(x)$ we indicate the open ball with center $x \in \R^d$ and radius $r >0$, and simply write $B_r$ if $x = 0$. We define the ceil function by   $\lceil t \rceil := \min \lbrace z \in \mathbb{Z} \colon z \ge t\rbrace$ for $t \in \R$.

\section{Setting and main results}\label{sec:setting}

We consider particle systems in two dimensions, and model their interaction by classical potentials in the frame of Molecular Mechanics \cite{Molecular, Lewars}. Indicating  the configuration of particles by $C_n =\{x_1,\ldots,x_n\} \subset \mathbb{R}^2$, we define its energy by
\begin{align}\label{eq: main energy}
\mathcal{F}(C_n) = \frac{1}{2}\sum_{i\neq j} v_2(|x_i-x_j|) + \frac{1}{2} \sum_{i,j,k} v_3(\theta_{i,j,k})\,.
\end{align}
Here, $\theta_{i,j,k}$ denotes the angle formed by the vectors $x_j-x_i$ and $x_k-x_i$ (counted clockwisely), and the second sum runs over triples $(i,j,k)$ \BBB with  $|x_i-x_j| \le r_0$ and $|x_i-x_k| \le r_0$, where $r_0$ is given in ($\rm ii_2$) below. \EEE    The factor $\frac{1}{2}$ accounts for double counting of bonds and angles. In the following, for simplicity we denote the angle formed by the vectors $x-y$ and $z-y$ by  $\theta_{x,y,z}$. We fix  $0<\varepsilon<\varepsilon_0$ for $\varepsilon_0 < \frac{\pi}{6}$ specified in Lemma~\ref{lemma:elementaryprop}. The two-body potential $v_2 \colon [0,+\infty) \to {\mathbb{R}}  \cup \lbrace + \infty \rbrace $  satisfies 
\begin{itemize}
\item[($\rm i_2$)] $\min_{r \geq 0} v_2(r) = v_2(1)=-1$ and $v_2(r) >-1$ if $r \neq 1$;
\item[($\rm ii_2$)] There exists $1<r_0<\sqrt{2}$ such that $v_2(r)=0$ for all $r \geq r_0$;
\item[($\rm iii_2$)] For all $r \in [0,1-\varepsilon] $  it holds that   $v_2(r) > \varepsilon^{-1}$.
\end{itemize}
The three-body potential $v_3 \colon [0,2\pi] \to \mathbb{R}$ satisfies
\begin{itemize}
\item[($\rm i_3$)] $v_3(\theta)=v_3(2\pi-\theta)$ for all $\theta \in [0,2\pi]$;
\item[($\rm ii_3$)] $v_3(k\pi/2)=0$ for $k=1,2,3$ and $v_3(\theta) >0$ if $\theta \notin \{\pi/2,\pi,3\pi/2\}$;
\item[($\rm iii_3$)] $v_3(\theta)  \geq  \EEE 4(\pi/6-\varepsilon)^{-1} \, |\theta -\pi |$ for all $\theta \in [\pi- \eps,\pi + \eps]$     with equality only if $\theta = \pi$;  
\item[($\rm iv_3$)]  $\theta \notin [\pi/2-\varepsilon,\pi/2+\varepsilon] \cup [\pi-\varepsilon,\pi+\varepsilon] \cup [3\pi/2-\varepsilon,3\pi/2+\varepsilon] \implies v_3(\theta) > \frac{4}{(1-\varepsilon)^2}(\sqrt{2} +\frac{1}{2})^2$\,.
\end{itemize}
We briefly comment on the assumptions. Condition ($\rm i_2$) on a unique minimum (here normalized to $1$)   is natural, e.g., it is valid for Lennard-Jones-type potentials.  Assumption ($\rm ii_2$) states that $v_2$ has compact support. In particular, it ensures that for configurations $ C_n  \subset \mathbb{Z}^2$ only atoms at distance $1$ interact.  These atoms are  usually  referred to as \emph{nearest neighbors} in the literature\EEE. Eventually, ($\rm iii_2$) prevents clustering of points. In fact,  along with ($\rm ii_2$) it shows Definition \ref{def:epsregular}(i) in the proof of Lemma \ref{lemma:elementaryprop} below.  Condition ($\rm i_3$) ensures that the potential $v_3$ does not depend on how (clockwise or counter-clockwise) bond angles are measured, and ($\rm ii_3$) guarantees  that for $ C_n \subset \mathbb{Z}^2$ there is no contribution stemming from the three-body interaction. Slope conditions similar to   ($\rm iii_3$) have been used in \cite{FriedrichKreutzHexagonal,FriedrichKreutzSquare,Mainini-Piovano, Mainini} in order to obtain crystallization on the square or hexagonal lattice. Let us mention that in the  other works the condition is needed at  \emph{all} minimum points of $v_3$, whereas here only at $\pi$. As a consequence,  the potential is necessarily non-smooth  at $\pi$.  We also point out that in this work the focus lies on a new proof strategy and all appearing  specific numerical constants   are chosen for computational simplitcity rather than optimality.  The two potentials are illustrated in Figure~\ref{fig:potentials}.

\begin{figure}[htp]
\begin{tikzpicture}
\tikzset{>={Latex[width=1mm,length=1mm]}};

\draw[dash pattern=on 1.5pt off 1pt,ultra thin](1,-.04)--(1,-.475);

\draw[->](0,-1)--++(0,4) node[anchor= east] {$v_{2}(r)$};
\draw[->](-1,0)--++(4.5,0) node[anchor =north] {$r$};

\draw(2.6,-.04)--++(0,.08) node[anchor=south]{$r_0$}; 

\draw(1,-.04)--++(0,.08);

\draw(1,0) node[anchor=south]{$1$}; 

\draw[dash pattern=on 1.5pt off 1pt,ultra thin](0,-.475)--(1,-.475);

\draw(0,-.475) node[anchor=east]{$-1$};

\draw[thick,domain=.725:2.5, smooth, variable=\x] plot ({\x}, {.025+.5/(\x*\x*\x*\x*\x*\x*\x*\x)-1/(\x*\x*\x*\x)});
\draw[thick](2.49,0)--++(0:.8);

\begin{scope}[shift={(7,0)}]

\draw[dash pattern=on 1.5pt off 1pt,ultra thin](2,0)--++(110:.5);

\draw[dash pattern=on 1.5pt off 1pt,ultra thin](2,0)--++(110:-.25);
\draw[->](0,-1)--++(0,4) node[anchor= east] {$v_{3}(\theta)$};
\draw[->](-1,0)--++(6,0) node[anchor =north] {$\theta$};

\draw(1,-.04)--++(0,.08); 
\draw(3,-.04)--++(0,.08); 

\draw(1,0) node[anchor=north]{$\pi/2$};

\draw(3,0) node[anchor=north]{$3\pi/2$};

\draw(4,-.04)--++(0,.08); 
\draw(2,-.04)--++(0,.08); 
\draw(4,0) node[anchor=north]{$2\pi$};
\draw(2,-.06) node[anchor=north]{$\pi$};

\draw[thick,domain=2.93333:3.04196, smooth, variable=\x] plot ({\x}, {70*(\x-3)*(\x-3)});

\draw[thick,domain=3.04196:4, smooth, variable=\x] plot ({\x}, {1.5*(-.21 - (\x-2.9)*(\x-5.1))});

\draw[thick,domain=0:0.958042, smooth, variable=\x] plot ({\x}, {1.5*(-.21 - (\x-1.1)*(\x+1.1))});

\draw[thick,domain=2:2.93333, smooth, variable=\x] plot ({\x}, {5*(-.11 - (\x-1.9)*(\x-3.1))});

\draw[thick,domain=0.958042:1.06667, smooth, variable=\x] plot ({\x}, {70*(\x-1)*(\x-1)});

\draw[thick,domain=1.06667:2, smooth, variable=\x] plot ({\x}, {5*(-.11 - (\x-.9)*(\x-2.1))});
\end{scope}
\end{tikzpicture}
\caption{The potentials $v_2$ and $v_3$.}
\label{fig:potentials}
\end{figure}
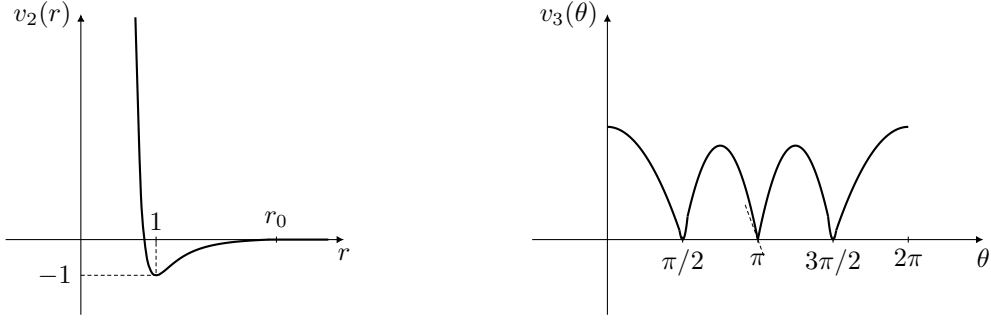

We now  state the main theorems of the paper. \BBB We emphasize that the main theorems have been shown previously in the literature, see \cite{FriedrichKreutzSquare,Mainini-Piovano}.  As outlined in the introduction, the main novelty lies in the proof technique. \EEE

\begin{theorem}[Crystallization]\label{thm:crystallization}
 For each $C_n \in (\R^2)^n$, it holds that 
 \begin{align}\label{eq: main claim0}
 \mathcal{F}(C_n) \geq  - 2n + \EEE \lceil 2\sqrt{n}\rceil\,.
\end{align}
\BBB Equality in \eqref{eq: main claim0} implies that \EEE $C_n \subset \mathbb{Z}^2$  (up to a rigid motion).
\end{theorem}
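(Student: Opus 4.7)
The plan is to follow the \emph{stratification} strategy sketched in the introduction. The first step is standard preliminary reductions: if $\mathcal{F}(C_n)$ is close to the conjectured lower bound, then assumptions $(\rm ii_2)$--$(\rm iii_2)$ force every pair of atoms to lie either at distance in $[1-\eps,r_0)$ (in which case we call them \emph{bonded}) or at distance $\ge r_0$ (contributing nothing to $\mathcal{E}_2$), and assumption $(\rm iv_3)$ forces every bond angle at each vertex to lie in one of the three narrow windows around $\pi/2$, $\pi$, $3\pi/2$, up to absorbing a controlled slack from $\mathcal{E}_3$. A short geometric argument then shows that each atom has at most four bonded neighbours and that the bond graph locally looks like a deformed patch of $\Z^2$. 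Writing the two-body part as $-B+(\text{nonneg.})$ with $B$ the number of bonds, the theorem reduces to the edge-isoperimetric estimate $B\le 2n-\lceil 2\sqrt n\rceil$ (plus the three-body slack).

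Next, I set up strata: a \emph{stratum} is a maximal chain of bonded atoms $x_{i_0},x_{i_1},\dots,x_{i_\ell}$ whose consecutive bond angles at the internal vertices all lie in $[\pi-\eps,\pi+\eps]$. The $\eps$-regularity of the graph implies that every atom belongs to at most two strata (one ``horizontal'', one ``vertical''), and every stratum is either \emph{open} (its two endpoints have no further bonded neighbour in the stratum direction, so they contribute to the edge perimeter) or \emph{closed} (forming a cycle, contributing nothing to the perimeter).

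The heart of the proof, and in my view the main obstacle, is a stratification lemma asserting that any closed stratum can be \emph{opened}: one deletes a single bond from the stratum, and the three-body energy released pays for the lost two-body bond. The mechanism is Gauss--Bonnet-like. By $\eps$-regularity a closed stratum is a simple closed polygonal curve in the plane, so its signed exterior angles sum to $\pm 2\pi$, and therefore $\sum_j |\theta_j-\pi|\ge 2\pi$. By $(\rm iii_3)$ the three-body contribution collected along the stratum is bounded below by $\frac{4}{\pi/6-\eps}\cdot 2\pi\approx 48$, which comfortably exceeds the value $-v_2(1)=1$ lost by erasing a single bond. Iterating this operation finitely many times yields a modified bond graph whose every stratum is open and which satisfies $\mathcal{F}(C_n)\ge -\widetilde B$, with $\widetilde B$ the bond count of the opened graph.

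To conclude, I run a slicing argument on the opened graph. Its horizontal and vertical strata, say $a$ and $b$ in number, partition the $n$ atoms so that each atom lies in exactly one of each direction; since any horizontal stratum meets at most $b$ vertical ones, $n\le ab$, and integrality gives $a+b\ge\lceil 2\sqrt n\rceil$ via AM--GM. Every open stratum contributes two endpoints at which a bond in its direction is missing, hence the number of missing bonds is at least $2(a+b)\ge 2\lceil 2\sqrt n\rceil$; equivalently $\widetilde B\le 2n-\lceil 2\sqrt n\rceil$, which yields \eqref{eq: main claim0}. For the equality statement, tracing the inequalities shows that $\mathcal{F}(C_n)=-2n+\lceil 2\sqrt n\rceil$ forces every bond to have length exactly $1$, every bond angle to be exactly $\pi/2$, $\pi$, or $3\pi/2$, and no closed strata to exist; propagating the resulting rigid square-lattice structure along strata from any fixed atom then gives $C_n\subset\Z^2$ up to a rigid motion.
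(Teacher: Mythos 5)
Your overall architecture --- reduce to an $\eps$-regular bond graph, decompose it into strata, pay for surgery on bad strata with the three-body energy via $({\rm iii}_3)$, and conclude by an AM--GM/slicing count --- is the same as the paper's, but there is a genuine gap in your stratification step. You only open \emph{closed} strata, i.e.\ you delete a bond from straight paths whose accumulated angle deviation is of order $2\pi$. However, your final count tacitly requires much more from the surviving \emph{open} strata: you need the map sending an atom of a stratum to the other stratum through it to be injective (so that a stratum of length $\ell$ meets exactly $\ell$ distinct ``orthogonal'' strata, which is what makes $n\le ab$ true), and you need intersecting strata to behave like transversal lines. Both can fail for an open stratum with large angle excess: the deviations $|\theta_i-\pi|\le\eps$ accumulate, so a long open stratum may turn by almost $2\pi$ (a ``C-shape''), meet another stratum in two atoms, and thereby destroy the injectivity behind $n\le ab$. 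This is why Lemma \ref{lemma:construction} performs the surgery on \emph{every} straight path with angle excess at least $\pi/6-\eps$, not only on closed ones; the slope constant in $({\rm iii}_3)$ is calibrated so that the released three-body energy, at least $4(\pi/6-\eps)^{-1}\cdot(\pi/6-\eps)=4$, exactly pays for the increase of $F_{\rm bond}$ caused by deleting the two extremal edges of such a path. Lemma \ref{lemma:excess} then converts the resulting bound $\theta_{\rm ex}<\pi/6-\eps$ into the combinatorial facts your count needs, namely $\#\mathcal{S}^\perp(s)=l(s)$ and pairwise disjointness of orthogonal strata. Your Gauss--Bonnet mechanism is the right one, but it must be run at the threshold $\pi/6-\eps$ rather than at $2\pi$.

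Two further points. First, even with small angle excess, a \emph{global} two-colouring of the strata into ``horizontal'' and ``vertical'' is not justified (odd cycles of length at least $5$ in the strata intersection graph are not obviously excluded); the paper avoids this by a purely local count around the longest stratum $s_0$ and its longest orthogonal stratum $s^v$, using only $\#\mathcal{S}^\perp(s_0)=l(s_0)$, $\#\mathcal{S}^\perp(s^v)=l^v$, the disjointness of these two families, and $n\le l(s_0)\,l^v$ when $\bigcup_{s'\in\mathcal{S}^\perp(s_0)}s'=V$, with a separate case (b) otherwise. Second, the clean identity $F_{\rm bond}(G)=2\#\mathcal{S}$ of Lemma \ref{lemma:opengraph} requires adding degenerate one-point strata for atoms of degree $0$, $1$, or degree $2$ with a flat angle; without this bookkeeping, low-degree atoms break both $\sum_{s}l(s)=2n$ and your lower bound on the number of missing bonds. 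Finally, for a closed stratum the closing angle need only lie near $\pi/2$ or $3\pi/2$, so the correct lower bound on the internal angle excess is $3\pi/2-\eps$ rather than $2\pi$; this is harmless for your energy estimate but should be stated correctly.
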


 Some configurations of minimal energy are depicted in Figure~\ref{fig:groundstates}.

\begin{theorem}[$n^{3/4}$-law]\label{th: 34law}
There exists $c>0$ such that for all $n \in \N$ it holds that each ground state $C_n$, up to  a rigid motion, satisfies    
$$\# \big( C_n \triangle  S_n      \big) \le cn^{3/4}, $$
where $S_n :=  [ 1,\lceil \sqrt{n} \rceil ]^2 \EEE \cap \Z^2$.
\end{theorem}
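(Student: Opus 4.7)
The plan is to combine the rigidity from Theorem \ref{thm:crystallization} with the same horizontal/vertical slicing of the edge perimeter that drives the stratification argument.

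By Theorem \ref{thm:crystallization}, after a rigid motion we may assume $C_n\subset\Z^2$ with $\mathcal{F}(C_n)=-2n+\lceil 2\sqrt n\rceil$. For configurations in $\Z^2$ every bond angle is a multiple of $\pi/2$, so by ($\rm ii_3$) the three-body contribution vanishes, and by ($\rm i_2$)--($\rm ii_2$) the two-body contribution equals $-b(C_n)$, where $b(C_n)$ is the number of unit bonds. The elementary identity $2b(C_n)=4n-p(C_n)$, with $p(C_n)$ the edge perimeter of $C_n$, then forces
\[
p(C_n)=2\lceil 2\sqrt n\rceil.
\]

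I then slice. Let $h_i$ (resp.\ $v_j$) be the number of maximal intervals in the $i$-th row (resp.\ $j$-th column) of $C_n$, let $a_i$ be the width of the $i$-th row, and let $N_r,N_c$ denote the numbers of nonempty rows and columns. Since each row-interval contributes two horizontal boundary edges, $p(C_n)=2\sum_i h_i+2\sum_j v_j$. Combining the inequalities $N_r+N_c\le\sum_i h_i+\sum_j v_j$, $N_r\cdot N_c\ge\sum_i a_i=n$, and the AM--GM bound $N_r+N_c\ge 2\sqrt{N_rN_c}\ge 2\sqrt n$ with the identity above pins down $N_r+N_c=\lceil 2\sqrt n\rceil$ and shows that all rows and columns of $C_n$ are single intervals. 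The identity $(N_r-N_c)^2=(N_r+N_c)^2-4N_rN_c\le(2\sqrt n+1)^2-4n=O(\sqrt n)$ then yields the quantitative bound $|N_r-\sqrt n|,\,|N_c-\sqrt n|\le Cn^{1/4}$ for a universal constant $C$.

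Finally, I pass to bounding boxes. Connectedness of $C_n$ (which follows from the strict super-additivity $\lceil 2\sqrt a\rceil+\lceil 2\sqrt b\rceil>\lceil 2\sqrt{a+b}\rceil$ for $a,b\ge 1$, ruling out disjoint minimizing pieces) combined with row- and column-convexity forces the occupied rows and columns to be contiguous, so after translation the bounding box of $C_n$ is $[1,N_c]\times[1,N_r]$. Comparing with $S_n=[1,K]^2$, $K=\lceil\sqrt n\rceil$, one has
\[
\#(C_n\setminus S_n)\le N_rN_c-(N_r\wedge K)(N_c\wedge K)=O(n^{1/4})\cdot O(\sqrt n)=O(n^{3/4}),
\]
and $\#(S_n\setminus C_n)\le K^2-n+\#(C_n\setminus S_n)=O(\sqrt n)+O(n^{3/4})$. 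Summing the two contributions gives $\#(C_n\triangle S_n)\le cn^{3/4}$.

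The critical step is the arithmetic deduction $|N_r-\sqrt n|,|N_c-\sqrt n|=O(n^{1/4})$ from the two constraints $N_r+N_c=\lceil 2\sqrt n\rceil$ and $N_rN_c\ge n$; once this is in hand, the bounding-box bookkeeping is a direct counting exercise, and no further structural analysis of $C_n$ is required beyond what Theorem \ref{thm:crystallization} already supplies.
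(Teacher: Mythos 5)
Your proof is correct, but it takes a genuinely different route from the paper. The paper proves the $n^{3/4}$-law entirely within its strata framework, \emph{without} first invoking the rigidity $C_n\subset\Z^2$: it bounds the length $l(s_0)$ of the longest stratum and the length $l^v$ of the longest orthogonal stratum via the quadratic inequality $2l(s_0)^2-(4\sqrt n+7)l(s_0)+2n\le 0$ coming from $2\#\mathcal S\le 2\lceil 2\sqrt n\rceil$ and $l(s_0)\cdot l^v\ge n-\lceil 2\sqrt n\rceil$, and then traps $C_n$, up to a single stratum, inside an $l(s_0)\times l^v$ rectangle using the identity $\#\mathcal S=l(s_0)+l^v+1$. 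The authors emphasize that this works even when the configuration is not a priori a sublattice. You instead cash in the equality case of Theorem \ref{thm:crystallization} (legitimate, since the upper bound of Lemma \ref{lemma:energybound} shows ground states attain $-2n+\lceil 2\sqrt n\rceil$) to reduce to $C_n\subset\Z^2$, and then run an elementary row/column edge-isoperimetric argument: the chain $\lceil 2\sqrt n\rceil=\sum_ih_i+\sum_jv_j\ge N_r+N_c\ge 2\sqrt{N_rN_c}\ge 2\sqrt n$ pins down $N_r+N_c$, forces row- and column-convexity, and the discriminant bound $(N_r-N_c)^2\le(2\sqrt n+1)^2-4n$ gives $|N_r-\sqrt n|,|N_c-\sqrt n|=O(n^{1/4})$ — structurally the same quadratic estimate as the paper's, but cleaner because $N_r+N_c$ is exactly known. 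Your connectedness argument via strict superadditivity of $a\mapsto\lceil 2\sqrt a\rceil$ is sound (one should note $\sqrt a+\sqrt b-\sqrt{a+b}\ge 2-\sqrt2$ for $a,b\ge1$ to get the strict integer gap), and the bounding-box bookkeeping closes the proof. What your approach buys is a short, self-contained combinatorial argument on $\Z^2$; what it loses is the robustness the authors advertise, namely that their fluctuation estimate does not require the configuration to be a subset of a lattice.
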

Let us note that the scaling is sharp: the construction in  \cite[Section 3.2]{FriedrichKreutzSquare} shows that there exists a sequence $(n_k)_{k \in \N}$ with $n_k \to +\infty$ and corresponding ground states $C_{n_k}$  such that, up to applying any rigid motion to $ C_{n_k}$, it holds that \BBB
\begin{align*}
\# \big( C_{n_k} \triangle  S_{n_k}   \big) \ge \overline{c}n_k^{3/4} 
\end{align*}
for some $0<\overline{c}\leq c$, where $c$ is the constant given in Theorem \ref{th: 34law}.  Our proof allows to give an explicit estimate on the constant $c$, but we do not know if $\overline{c}=c$.  \EEE

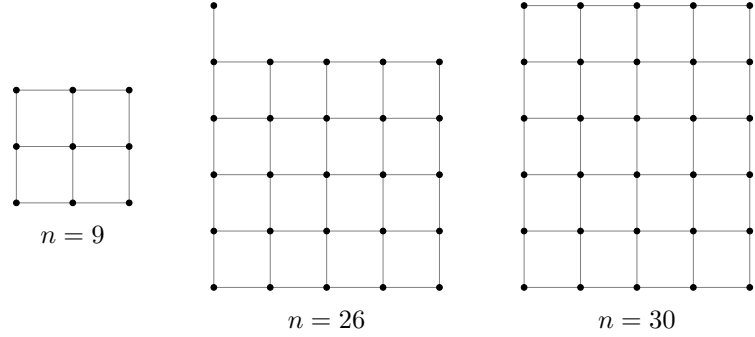
\begin{figure}[htp]
\begin{tikzpicture}[scale=0.75]

\begin{scope}[shift={(-5,0)}]

\draw(1,-.25) node[anchor=north] {$n=9$};

\draw[ultra thin,gray] (0,0) grid (2,2);

\foreach \j in {0,1,2}{
\foreach \i in {0,1,2}{
\draw[fill=black](\i,\j) circle(.05);

}

}

\end{scope}

\begin{scope}[shift={(-1.5,-1.5)}]

\draw(2,-.25) node[anchor=north] {$n=26$};

\draw[ultra thin,gray] (0,0) grid (4,4);

\draw[ultra thin,gray] (0,4) --++(0,1);

\draw[fill=black](0,5) circle(.05);

\foreach \j in {0,...,4}{
\foreach \i in {0,...,4}{
\draw[fill=black](\i,\j) circle(.05);

}

}

\end{scope}

\begin{scope}[shift={(4,-1.5)}]

\draw(2,-.25) node[anchor=north] {$n=30$};

\draw[ultra thin,gray] (0,0) grid (4,5);

\foreach \j in {0,...,5}{
\foreach \i in {0,...,4}{
\draw[fill=black](\i,\j) circle(.05);

}

}

\end{scope}

\end{tikzpicture}
\caption{Configurations of minimal energy for different cardinality.}
\label{fig:groundstates}
\end{figure}

\section{Stratification}\label{sec:mainproof}

After a short preliminary on graph theory, this section is devoted to the main technique of this paper: modification of bond graphs, called stratification. 

\subsection{Bond graph} \label{subsec:graph theory}  We denote by $G=(V,E)$ a graph, where $V \subset \R^2$ indicates the set of vertices and $E \subset \{\{x,y\} \colon x,y \in V \text{ and } x\neq y \}$ is  the set of edges. For $x \in V$, we denote the neighborhood with respect to $G$ by
\begin{align*}
\mathcal{N}(x,E) :=\{y \in V \colon \{x,y\} \in E\}\,.
\end{align*}
Given $G=(V,E)$ we define
\begin{align*}
F(G) = F_\mathrm{bond}(G) + F_{\mathrm{ex}}(G)\,,
\end{align*}
where
\begin{align*}
F_{\mathrm{bond}}(G) = \sum_{x \in V} (4-\#\mathcal{N}(x,E))\,
\end{align*}
 is the \textit{bond energy} and  
\begin{align*}
 F_{\mathrm{ex}}(G)  =  \sum_{\{x,y\} \in E} (v_2(|x-y|)+1) +  \sum_{ \{x,y\},\{y,z\} \in E} v_3(\theta_{x,y,z})
\end{align*}
the \textit{excess energy}. For $V'\subset V$, we also define the localized elastic energy by
\begin{align}\label{def:Felasticlocal}
F_{\mathrm{ex}}(V') = F_{\mathrm{ex}}(G[V'])\,,
\end{align}
where $G[V']$ is the (vertex) induced subgraph of $V'$ in $G$, that is $G[V']= (V',E')$ with $E'= \{\{x,y\} \in E\colon x,y \in V'\}$.

We will identify each $C_n \subset \mathbb{R}^2$ with its \emph{natural bond graph} $G_{\mathrm{nat}}=(V,E_{\mathrm{nat}})$, where $V=C_n$ and the \emph{natural edges} are given by
\begin{align}\label{eq: relation-new}
  E_\mathrm{nat}   = \{\{x,y\} \colon x, y \in C_n, |x-y|\leq r_0\}\,,
\end{align}
 for $r_0>0$ as given in {\rm ($\rm{ii}_2$)}.
 This definition is motivated by the relation to  \eqref{eq: main energy}, namely 
\begin{align}\label{eq: relation}
2\mathcal{F}(C_n) =  -4n  + F(G_{\mathrm{nat}})\,.
\end{align} 
In Subsection \ref{subsec:stratbondgraph} below,  we will successively modify $E_{\mathrm{nat}}$ to a smaller set of edges $E\subset E_\mathrm{nat}$. \EEE

\begin{definition}\label{def:epsregular} We say that $G=(V,E)$ is \emph{$\varepsilon$-regular} if:
\begin{itemize}
\item[(i)] If $\{x,y\} \in E$, then
\begin{align*}
 |x-y| \ge 1-\varepsilon\,;
\end{align*}
\item[(ii)] If $\theta$ is a bond angle, then
\begin{align*}
\theta \in [\pi/2-\varepsilon,\pi/2+\varepsilon] \cup [\pi-\varepsilon,\pi+\varepsilon] \cup [3\pi/2-\varepsilon,3\pi/2+\varepsilon]\,.
\end{align*}
\end{itemize}
\end{definition}
Note that, if $G_{\mathrm{nat}}=(V,E_{\mathrm{nat}})$ is $\varepsilon$-regular, then it is easy to see that $G=(V,E)$ is $\varepsilon$-regular for all $E\subset E_{\mathrm{nat}}$.

\begin{lemma}\label{lemma:elementaryprop} There exists $\varepsilon_0 >0$ such that the following holds true:  if $v_2$, $v_3$ satisfy {\rm ($\rm{i}_2$)}--{\rm ($\rm{iii}_2$)} and {\rm ($\rm{i}_3$)}--{\rm ($\rm{iv}_3$)} for some $0<\varepsilon<\varepsilon_0$ and  if $C_n$ is a minimizer of \eqref{eq: main energy}, then its natural bond graph $G_{\mathrm{nat}}=(V,E_{\mathrm{nat}})$ is $\varepsilon$-regular. Moreover, it holds that  $\#\mathcal{N}(x,E_{\mathrm{nat}}) \leq 4$ for all $x \in V$.
\end{lemma}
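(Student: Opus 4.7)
The strategy is a direct energy-comparison argument. For each failure mode of $\varepsilon$-regularity I would assume the property fails at some atom and construct a competitor $C_n' \in (\mathbb{R}^2)^n$ of strictly smaller energy, contradicting the minimality of $C_n$. The competitor is always of the form
\[
C_n' \defas (C_n \sm \{y_0\}) \cup \{x^*\},
\]
where $y_0$ is the offending atom and $x^*$ is placed at distance larger than $r_0$ from every atom of $C_n \sm \{y_0\}$. By ($\rm{ii}_2$) such an $x^*$ has no edge in the natural bond graph, hence contributes nothing to the two-body energy, and, having no bonds, is not involved in any bonded triple. Thus $\mathcal{F}(C_n) - \mathcal{F}(C_n')$ equals the total interaction energy carried by $y_0$ in $C_n$, and it suffices to show that this quantity is strictly positive in each case.

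First I would establish the degree bound $\#\mathcal{N}(x,E_\mathrm{nat}) \le 4$. Suppose, for contradiction, that some $x_0$ satisfies $L \defas \#\mathcal{N}(x_0,E_\mathrm{nat}) \ge 5$. Ordering the $L$ neighbors cyclically by azimuth around $x_0$ produces consecutive bond angles $\theta_1,\ldots,\theta_L \in (0,2\pi)$ with $\sum_i \theta_i = 2\pi$. Taking $\varepsilon_0 < \pi/10$ makes $5(\pi/2-\varepsilon) > 2\pi$, so since every angle in the allowed set $[\pi/2-\varepsilon,\pi/2+\varepsilon] \cup [\pi-\varepsilon,\pi+\varepsilon] \cup [3\pi/2-\varepsilon,3\pi/2+\varepsilon]$ is $\ge \pi/2-\varepsilon$, at most four of the $\theta_i$ can lie there. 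The remaining $\ge L-4$ consecutive angles are therefore ``bad'', and by ($\rm{iv}_3$) each contributes $v_3(\theta_i) > c_3 \defas \tfrac{4}{(1-\varepsilon)^2}(\sqrt{2}+\tfrac{1}{2})^2 > 14$. Since the two-body terms at $x_0$ are each $\ge -1$ by ($\rm{i}_2$), and the three-body terms at $x_0$ over non-consecutive pairs, as well as those with $x_0$ as an outer vertex, are non-negative by ($\rm{ii}_3$), deletion of $x_0$ yields an energy drop of at least $c_3(L-4) - L > 14(L-4)-L = 13L-56 \ge 9$ for $L \ge 5$, contradicting minimality.

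With $\#\mathcal{N}(x) \le 4$ established, condition (ii) of Definition~\ref{def:epsregular} follows by the same substitution applied to a vertex $x$ carrying a bad bond angle $\theta_{y,x,z}$: by ($\rm{iv}_3$) that single angle already contributes $v_3(\theta) > c_3 > 14$, whereas the two-body terms at $x$ are bounded below by $-|\mathcal{N}(x)| \ge -4$ and the remaining three-body contributions at and through $x$ are non-negative, yielding a strict energy drop of at least $10$. Condition (i) is handled analogously via ($\rm{iii}_2$): if $\{x,y\} \in E_\mathrm{nat}$ with $|x-y|<1-\varepsilon$, then $v_2(|x-y|) > \varepsilon^{-1}$, while the other two-body terms at $y$ sum to at least $-(|\mathcal{N}(y)|-1) \ge -3$ and the three-body terms at $y$ are non-negative. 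Imposing $\varepsilon_0 \le 1/4$ makes $\varepsilon^{-1} > 3$, so removing $y$ strictly lowers the energy. Setting $\varepsilon_0 \defas \min(\pi/10,1/4)$ thus closes the argument.

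The only mildly delicate step is the pigeonhole count in the degree bound, which fixes the threshold $\varepsilon_0 < \pi/10$ and ensures that any high-degree vertex carries enough bad consecutive bond angles to outweigh its two-body savings; the remaining two conditions then reduce to routine one-vertex modifications once the degree bound is available.
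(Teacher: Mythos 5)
Your proof is correct, but it takes a genuinely different route from the paper's. The paper proves Definition~\ref{def:epsregular}(i) \emph{first}, and has to do so via a clustering argument: it bounds the maximal number $M$ of atoms in a ball of radius $\frac{1}{2}(1-\varepsilon)$ by comparing the repulsion $\frac{1}{2\varepsilon}M(M-1)$ from {\rm ($\rm{iii}_2$)} against the attraction $-NM^2$ with a surrounding annulus, where $N$ is a covering number; this is needed because, without any degree bound, a single offending atom could a priori carry arbitrarily many attractive bonds, so one must relocate the whole cluster at once. Only then does the paper derive the crude packing bound \eqref{ineq:neighbourhood} on $\#\mathcal{N}(x,E)$, use {\rm ($\rm{iv}_3$)} (whose constant is calibrated exactly against that packing bound) to exclude bad angles, and finally conclude $\#\mathcal{N}(x,E)\le 4$ from the angle sum. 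You invert this order: the pigeonhole on the $L$ consecutive azimuthal angles (at most four can exceed $\pi/2-\varepsilon$ when they sum to $2\pi$) shows directly that a vertex of degree $L\ge 5$ carries at least $L-4$ angles violating {\rm ($\rm{iv}_3$)}, each worth more than $14$, against a two-body saving of at most $L$; since $14(L-4)>L$ for all $L\ge5$, the degree bound comes first and is self-contained (it needs neither condition (i) nor any packing estimate). Conditions (ii) and (i) then reduce to one-vertex relocations with the other bonds crudely bounded by $-4$ and $-3$ respectively. What your approach buys is the elimination of the covering/packing machinery and an explicit threshold $\varepsilon_0=\min(\pi/10,1/4)$ instead of one depending on an unspecified covering number $N$; what it gives up is that the minimal-distance property becomes contingent on the angular potential {\rm ($\rm{iv}_3$)}, whereas the paper's Step~1 establishes it from the two-body assumptions alone. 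Both arguments are valid under the stated hypotheses.
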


 Analogous properties have been derived in  \cite[Propostion 2.1]{Mainini-Piovano} and \cite[Lemma 2.2]{Theil}. However, as our assumptions on the potentials are slightly different,  we include a sketch of the proof for the reader's convenience in Appendix \ref{appendix}.  For the remainder of this paper, we assume that $\varepsilon_0>0$ is chosen small enough such that Lemma \ref{lemma:elementaryprop} holds true and that $v_2$, $v_3$ satisfy {\rm ($\rm{i}_2$)}--{\rm ($\rm{iii}_2$)} and {\rm ($\rm{i}_3$)}--{\rm ($\rm{iv}_3$)} for some $0<\varepsilon<\varepsilon_0$. \BBB Moreover, we suppose that $\varepsilon_0 < 1- \frac{r_0}{\sqrt{2}}$, where $r_0$ is given in {($\rm{ii}_2$)}. This ensures that the bond graph is planar. Indeed, given a quadrilateral
with all sides larger or equal than $1-\eps_0$, one diagonal has at least length $\sqrt{2}(1-\eps_0)$.\EEE

\subsection{Stratified bond graph}  \label{subsec:stratbondgraph}

Given $G=(V,E)$, we say that $\gamma=(x_1,\ldots,x_N)$ with $x_i \in V$ for all $i=1,\ldots,N$ is  a \emph{straight path} if  $N \ge 2$ and the following holds:
\begin{itemize}
\item[(i)] $\{x_i,x_{i+1}\} \in E$ for all $i=1,\ldots,N-1\,$;
\item[(ii)] $\theta_i \in [\pi-\varepsilon,\pi+\varepsilon]$ for all $i \in 2,\ldots,N-1$, where   $\theta_i =\theta_{x_{i+1},x_i,x_{i-1}}$;   
\item[(iii)] $\{x_i,x_{i+1}\} \neq \{x_{j},x_{j+1}\}$ for all $i,j =1,\ldots,N-1$, $ j \neq i$.
\end{itemize}
 (If $N=2$, (ii) and (iii) are empty.) \BBB Note that paths are ordered subsets of $V$ but they are not oriented, i.e., $(x_1,\ldots,x_N)$ and $(x_N,\ldots,x_1)$ should be considered as the same straight path. When taking intersections and unions, we will sometimes regard straight paths as subsets of $V$ with a slight abuse of notation. \EEE  The set of straight paths is denoted by
\begin{align*}
\Gamma (G) :=\{\gamma \text{ straight path}\}\,.
\end{align*}
 We drop $G$ and write $\Gamma$ if no confusion arises.
If $\gamma \in \Gamma$ and $x_1=x_N$, we say that $\gamma$ is \emph{closed} and otherwise that $\gamma$ is \emph{open}. In the following, we add some strata for degenerate points which will be convenient for Lemma \ref{lemma:opengraph}. Specifically,  we define 
\begin{align}\label{def:V0V1}
\begin{split}
&V_i :=\{x \in V \colon \# \mathcal{N}(x,E) = i  \} \text{  for $i=0,\ldots,4$\,,} \\&V_2^\pi :=\{x \in V_2 \colon  \theta_{x_1,x,x_2},   \in [\pi-\varepsilon,\pi+\varepsilon] \text{ where } \mathcal{N}(x,E) = \lbrace x_1,x_2 \rbrace \}\,.
\end{split}
\end{align}
 Note that in the second definition, one could equally use the angle $\theta_{x_2,x,x_1}$ as $\theta_{x_2,x,x_1} = 2\pi - \theta_{x_1,x,x_2}$.  If $x \in V_0$ we set $s(x) =\{(x),(x)\}$, if $x \in V_1 \cup V_2^\pi$ we set $s(x) = \{(x)\}$, \BBB called \emph{degenerate strata}.  We define \EEE the \emph{set of strata} by
\begin{align}\label{def:strata}
\mathcal{S}(G):=  \mathcal{S}_\Gamma \cup \bigcup_{x \in V_0 \cup V_1 \cup V_2^\pi} s(x), \quad \text{ where } \mathcal{S}_\Gamma:=\{\gamma \in \Gamma \colon \gamma \text{ is a maximal element  w.r.t.}  \subseteq \}\,.
\end{align}
\BBB We say that a stratum is \emph{open} if it is an open straight path or a degenerate stratum. Otherwise, a stratum is called \emph{closed}. \EEE
 We drop $G$ and write $\mathcal{S}$ if no confusion arises. Some closed, open, and degenerate strata are illustrated in Figure~\ref{fig:strata}. \BBB  In particular, $s(x)$ for $x \in V_0$ has to be understood as a multiset containing the stratum $(x)$ twice (strictly speaking $\mathcal{S}(G)$ is therefore the \emph{multiset} of all strata).  \EEE Adding the degenerate stratum $(x)$ with one element twice for $V_0$ and once for $V_1 \cup V_2^\pi$ has no geometrical interpretation but is merely \BBB for convenience: for graphs whose straight paths are all open, it allows us to relate the overall number of strata to $F_{\rm bond}$  and ensures that each atom is contained in exactly two strata. \EEE  More precisely, denoting by $l(s):= \# s$ the length of $s \in \mathcal{S}$, we have the following. 

\begin{figure}[htp]
\begin{tikzpicture}
\draw[ultra thin, gray] (0,-.2)--++(0,.4);
\draw[ultra thin, gray] (-.2,0)--++(.4,0);
\draw[fill=black] circle(.05);

\draw[ultra thin, gray] (2.8,0)--++(.4,0);

\draw[ultra thin, gray] (3,-2.2)--++(0,2.4);

\draw[ultra thin, gray] (2.8,0)++(10:1)++(15:1)--++(.4,0);

\draw[ultra thin, gray] (2.8,0)++(10:1)++(15:1)++(20:1)++(30:1)--++(.4,0);
\draw (3.2,0.2) node[anchor=south]{$x \in V_1$};
\draw (0.2,0.2) node[anchor=south]{$x \in V_0$};
\draw (3,-1.8)++(10:1)++(15:1)++(20:1) node[anchor=south]{$x \in V_2^\pi$};

\draw[ultra thin, gray] (3,-2.2)++(10:1)--++(0,1.4);

\draw[ultra thin, gray] (3,-2.2)++(10:1)++(15:1)--++(0,2.4);

\draw[ultra thin, gray] (3,-1.2)++(10:1)++(15:1)++(20:1)--++(0,.4);
\draw[ultra thin, gray] (3,-2.2)++(10:1)++(15:1)++(20:1)++(30:1)--++(0,2.4);

\draw[ultra thin, gray] (3,-1)--++(10:-.2)--++(10:1.2)--++(15:1)--++(20:1)--++(30:1.2);

\draw[ultra thin, gray] (3,-2)--++(10:-.2)--++(10:1.2)--++(15:1.2);
\draw[ultra thin, gray] (3,-2)++(10:1)++(15:1)++(20:1)++(30:0.8)--++(30:.4);

\draw[fill=black] (3,-1)circle(.05)++(10:1)circle(.05)++(15:1)circle(.05)++(20:1)circle(.05)++(30:1)circle(.05);

\draw[fill=black] (3,0)circle(.05)++(10:1)++(15:1)circle(.05)++(20:1)++(30:1)circle(.05);
\draw[fill=black] (3,-2)circle(.05)++(10:1)circle(.05)++(15:1)circle(.05)++(20:1)++(30:1)circle(.05);

\begin{scope}[shift={(-4,-2)},rotate=10]

\foreach \j in {0,...,2}{
\draw[ultra thin, gray] (\j,-.2)--++(0,2.4);
\draw[ultra thin, gray] (-.2,\j)--++(2.4,0);
}

\foreach \i in {0,...,2}{
\foreach \j in {0,...,2}{

\draw[fill=black](\i,\j)circle(.05);

}

}
\end{scope}

\begin{scope}[shift={(.3,-2.4)},rotate=20]

\foreach \j in {0,...,8}{

\draw[ultra thin, gray](\j*45:1.2)--(45+\j*45:1.2);

\draw[ultra thin, gray](\j*45:1)--++(\j*45:.4);

}
\foreach \j in {0,...,8}{

\draw[fill=black](\j*45:1.2) circle(.05);

}

\end{scope}
\end{tikzpicture}
\caption{Illustration of some strata in the bond graph of $G$. \BBB Degenerate strata are indicated by short segments crossing the atoms.\EEE}
\label{fig:strata}
\end{figure}
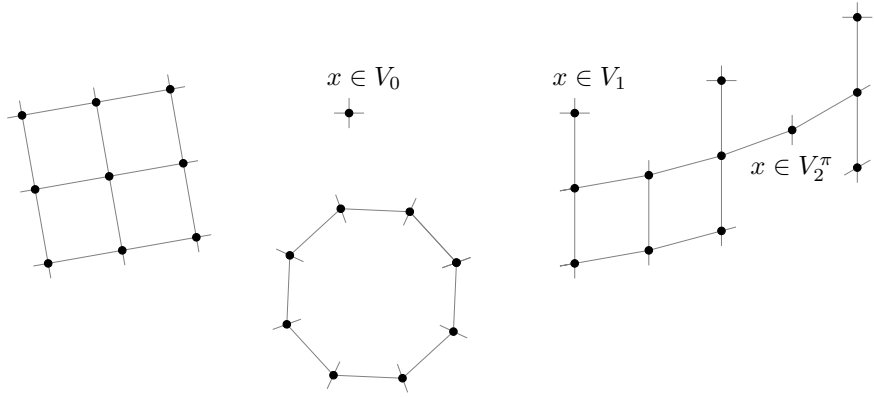

\begin{lemma}(Properties of graphs  only containing open paths)\label{lemma:opengraph}  Let $G=(V,E)$ be an $\eps$-regular graph. Assume that all $\gamma\in \Gamma$ are open. Then, the following holds: 
\begin{itemize}
\item[\rm (i)]  $\sum_{s \in \mathcal{S}} l(s) =2n$;
\item[\rm (ii)] $ F_{\rm bond}(G) = \sum_{x \in V} (4-\#\mathcal{N}(x,E)) = 2\#\mathcal{S}$.
\end{itemize}
\end{lemma}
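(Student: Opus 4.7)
The plan is to prove both items via a double-counting argument, after classifying each vertex $x \in V$ by the local structure of its neighborhood via the partition $V = V_0 \cup V_1 \cup V_2^\pi \cup (V_2 \setminus V_2^\pi) \cup V_3 \cup V_4$. The bookkeeping quantities are, for each $x \in V$, the number $p(x)$ of occurrences of $x$ in elements of $\mathcal{S}_\Gamma$, and the number $e(x)$ of those occurrences at which $x$ is an endpoint.

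The main geometric input, which I expect to be the principal obstacle, is the following local analysis at $x$. By $\varepsilon$-regularity, every bond angle at $x$ lies in one of the three windows $[\pi/2-\varepsilon,\pi/2+\varepsilon]$, $[\pi-\varepsilon,\pi+\varepsilon]$, $[3\pi/2-\varepsilon,3\pi/2+\varepsilon]$; moreover, given $y \in \mathcal{N}(x,E)$, there is at most one $z \in \mathcal{N}(x,E)$ with $\theta_{y,x,z} \in [\pi-\varepsilon,\pi+\varepsilon]$, for otherwise two such partners would yield a bond angle $\theta_{z_1,x,z_2}$ near $0$, impossible for $\varepsilon_0 < \pi/6$. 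Combined with the fact that the consecutive cyclic angles at $x$ sum to $2\pi$, this forces: for $\#\mathcal{N}(x,E) = 3$, exactly one consecutive pair of neighbors has angle $\approx \pi$, the other two $\approx \pi/2$; for $\#\mathcal{N}(x,E) = 4$, all four consecutive angles are $\approx \pi/2$, so the two pairs of opposite neighbors form straight diameters through $x$. The edges at $x$ therefore partition canonically into \emph{straight pairs} (each placing $x$ in the interior of some stratum of $\mathcal{S}_\Gamma$) and \emph{singletons} (each placing $x$ at an endpoint of some stratum of $\mathcal{S}_\Gamma$), giving the table
\[
(p(x),e(x)) = (0,0),\,(1,1),\,(1,0),\,(2,2),\,(2,1),\,(2,0)
\]
on $V_0, V_1, V_2^\pi, V_2 \setminus V_2^\pi, V_3, V_4$, respectively.

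For (i), I combine $\sum_{x \in V} p(x) = \sum_{s \in \mathcal{S}_\Gamma} l(s)$ with the contribution $2\#V_0 + \#V_1 + \#V_2^\pi$ of the degenerate strata in $\mathcal{S}\setminus\mathcal{S}_\Gamma$ attached to $V_0 \cup V_1 \cup V_2^\pi$ through \eqref{def:strata} (with $V_0$ counted twice). Summing over all vertex classes yields
\[
\sum_{s \in \mathcal{S}} l(s) = 2\#V_0 + 2\#V_1 + 2\#V_2 + 2\#V_3 + 2\#V_4 = 2n.
\]

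For (ii), the standing hypothesis that every $\gamma \in \Gamma$ is open guarantees that each $s \in \mathcal{S}_\Gamma$ contributes exactly two endpoint-occurrences, so $2\#\mathcal{S}_\Gamma = \sum_{x \in V} e(x) = \#V_1 + 2\#(V_2 \setminus V_2^\pi) + \#V_3$. Adding the $2(2\#V_0 + \#V_1 + \#V_2^\pi)$ coming from the degenerate strata then yields
\[
2\#\mathcal{S} = 4\#V_0 + 3\#V_1 + 2\#V_2 + \#V_3 = \sum_{x \in V}(4-\#\mathcal{N}(x,E)) = F_{\mathrm{bond}}(G),
\]
which completes the argument.
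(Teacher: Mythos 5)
Your proof is correct, and for part (ii) it takes a genuinely different route from the paper. For part (i) both arguments amount to the same double count: every vertex occurs in exactly two strata (counting the artificial degenerate strata attached to $V_0\cup V_1\cup V_2^\pi$), so $\sum_{s}l(s)=2n$; your version just makes explicit, class by class, the local angle analysis that the paper leaves implicit. For part (ii), however, the paper argues by induction on $\#E$: it removes a terminal edge of a stratum, checks that $\#\mathcal{S}$ increases by exactly one while $\sum_x(4-\#\mathcal{N}(x,E))$ increases by two, and concludes. You instead prove the identity in one shot by a handshake count: openness gives $2\#\mathcal{S}_\Gamma=\sum_x e(x)$, and your table $(p(x),e(x))$ on the six vertex classes lets you evaluate both $2\#\mathcal{S}$ and $\sum_x(4-\#\mathcal{N}(x,E))$ as $4\#V_0+3\#V_1+2\#V_2+\#V_3$. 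The trade-off is clear: the paper's induction is purely combinatorial and needs no geometry beyond the definition of strata, whereas your argument front-loads the geometric classification of bond angles at vertices of degree $3$ and $4$ (forced angle patterns $(\pi/2,\pi/2,\pi)$ and $(\pi/2,\pi/2,\pi/2,\pi/2)$), which in exchange yields a transparent, non-recursive identity and simultaneously justifies the ``each vertex lies in exactly two strata'' claim used in (i). The only points to watch are the ones the paper also glosses over: that each edge (respectively each straight pair) at a vertex determines a unique occurrence in a unique maximal straight path, which follows from your observation that the straight partner of an edge at a vertex is unique for $\varepsilon_0<\pi/6$, and that degenerate self-intersections of a maximal path do not upset the counts of $p(x)$ and $e(x)$ (they do not, since these are defined via occurrences with multiplicity).
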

\begin{proof} We prove the two statements in separate steps.\\
(i) \BBB It suffices to show that each $x \in V$ belongs  to exactly two $s \in \mathcal{S}$.
First, each $x \in V \setminus (V_0 \cup V_1 \cup V_2^\pi)$ lies in exactly two elements of  $\mathcal{S}_\Gamma$ and in no degenerate stratum. Indeed, as $G$ is $\varepsilon$-regular, we can find two different straight paths that contain $x$ as the only common point and whose union is not a straight path. Here, we used that $\#\mathcal{N}(x,E) \geq 2$ and $x \notin V_2^\pi$. Since all $\gamma \in \Gamma$ are open, this guarantees that there exist two different maximal straight paths containing $x$ (left figure of Figure~\ref{fig:threecases} below is excluded). The $\varepsilon$-regularity of $G$ also implies that there are at most two maximal straight paths through $x$.

  Secondly, each $x \in V_1 \cup V_2^\pi$ lies in exactly one element of $\mathcal{S}_\Gamma$ and  $x \in V_0$ is not contained in any element of $\mathcal{S}_\Gamma$. \BBB More precisely, each $x \in V_1$ is bonded to exactly one other atom and therefore forms a path. This path is contained in one maximal straight path. If $x \in V_2^\pi$, it forms a straight path together with $\mathcal{N}(x,E)$, according to definition \eqref{def:V0V1}. Again, this straight path  is contained in one maximal straight path. \EEE The definition of $s(x)$ for $x \in V_0 \cup V_1 \cup V_2^\pi$ \BBB now \EEE implies that each $x \in V$ belongs to  exactly two $s \in \mathcal{S}$. \BBB (This is the very reason for adding the degenerate strata in \eqref{def:strata}.) \EEE  Hence, {\rm (i)} follows. \\ 
(ii) We prove the statement by induction over $ m  =\#E$. It is clearly true for $m=0$ since, by definition of \eqref{def:strata}, $x \in V \implies x \in V_0$ and thus 
\begin{align*}
\#\mathcal{S} = 2\# V = \frac{1}{2}\sum_{x \in V} (4-\#\mathcal{N}(x,\emptyset)) = \frac{1}{2}\sum_{x \in V} (4-\#\mathcal{N}(x,E)) \,.
\end{align*}
 Let now $\# E=m \geq 1$ and let \BBB $s =(x_1,\ldots,x_N) \in \mathcal{S}$ \EEE be arbitrary. Consider  $\hat{E} := E \setminus \{x_1,x_2\}$ and  the corresponding graph $\hat{G}=(V,\hat{E})$. Then,
$\# \hat{E} = m-1$ and thus, by the induction hypothesis,
\begin{align*}
\sum_{x \in V} (4-
\#\mathcal{N}(x,\hat{E})) = 2\# {\mathcal{S}}(\hat{G})  \,,
\end{align*}
where ${\mathcal{S}}(\hat{G})$ is the set of strata of $\hat{G}$, defined in \eqref{def:strata}. Note that  $ {\mathcal{S}}(\hat{G})  =  (\mathcal{S} \cup \{(x_1)\} \cup \{(x_2,\ldots,x_N)\}  )\setminus s$ and thus $\# {\mathcal{S}}(\hat{G})  =  \#\mathcal{S}+1 $.  As $\#\mathcal{N}(x_i,E)= \#\mathcal{N}(x_i,\hat{E})+1$ for $i=1,2$, \EEE  we have 
\begin{align*}
\sum_{x \in V} (4-\#\mathcal{N}(x,E)) =  -2  +\sum_{x \in V} (4-
\#\mathcal{N}(x,\hat{E})) =  -2 + 2\# {\mathcal{S}}(\hat{G})  =   2 \EEE \#\mathcal{S} \,.
\end{align*}
This concludes the proof.  
\end{proof}

 We proceed with two definitions and a lemma on graphs with small angle excess. 
\begin{definition}[Angle excess] Given \BBB $\gamma=(x_1,\ldots,x_N) \in \Gamma$ for $N \geq 3$\EEE, we define the \emph{angle excess} by
\begin{align*}
\theta_{\mathrm{ex}}(\gamma) := \sum^{N-1}_{i=2} |\theta_i-\pi|\,, \quad \text{where $\theta_i =\theta_{x_{i+1},x_i,x_{i-1}}$}\,. 
\end{align*}
\BBB If $\gamma=(x_1,x_2) \in \Gamma$, we set $\theta_{\mathrm{ex}}(\gamma)=0$.
\end{definition}

\begin{definition}[Orthogonal strata]\label{def: strata-ort} Let $s \in \mathcal{S}$. We define the \emph{set of orthogonal strata} to $s$ by
\begin{align*}
\mathcal{S}^\perp(s)=\{s' \in \mathcal{S} \setminus \{s\} \colon s\cap s'\neq \emptyset\}\,.
\end{align*}
\end{definition}
 A stratum $s \in \mathcal{S}$ and its orthogonal strata are illustrated in Figure~\ref{fig:orthogonal}. \BBB For degenerate strata $s =(x) \in \mathcal{S}$ (recall the definition below \eqref{def:V0V1} and see Figure~\ref{fig:strata}), we explicitly have $\mathcal{S}^\perp(s)=\{(x)\}$ if $x \in V_0$ and $\mathcal{S}^\perp(s)=\{\gamma\}$ if $x \in V_1 
\cup V_2^\pi$, where $\gamma\in \mathcal{S}_\Gamma$ is the unique maximal straight path containing $x$, cf.\ proof of Lemma \ref{lemma:opengraph}(i).  The next lemma shows some elementary properties of graphs with small angle excess.  

\EEE

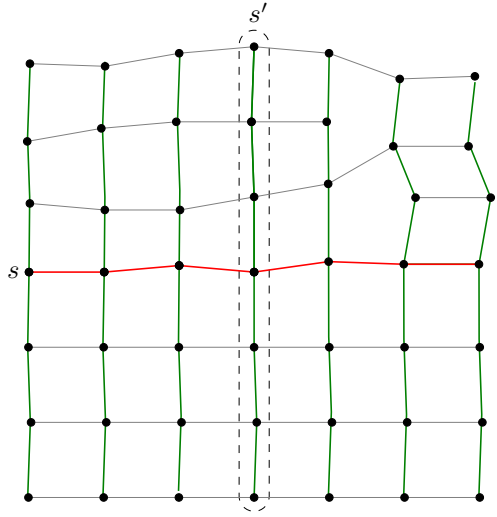
\begin{figure}
\begin{tikzpicture}

\draw(0,0)++(0:1.05)++(5:1)++(-5:1)++(89.5:1)++(92:.9)++(88:.9)++(90:.4) node[anchor=south]{$s'$};

\draw[ultra thin,dashed](0,0)++(0:1)++(5:1)++(-5:.8)--++(90:-3) arc(180:360:.2)--++(90:6) arc (0:180:.2)--++(270:3);

\draw[semithick,green!50!black](0,0)++(0:1)++(5:1)++(-5:1)++(8:1)++(-2:1)++(0:1)--++(90.1:-1)--++(92:-1)--++(88:-1.1);

\draw[semithick,green!50!black](0,0)++(0:1)++(5:1)++(-5:1)++(8:1)++(-2:1)--++(90.1:-1)--++(92:-1)--++(88:-1.1);

\draw[semithick,green!50!black](0,0)++(0:1)++(5:1)++(-5:1)++(8:1)--++(90.1:-1)--++(92:-1)--++(88:-1.1);

\draw[semithick,green!50!black](0,0)++(0:1)++(5:1)--++(89.5:-1)--++(92:-1)--++(88:-1);

\draw[semithick,green!50!black](0,0)++(0:1)--++(89:-1)--++(92:-1)--++(88:-1);

\draw[semithick,green!50!black](0,0)--++(89:-1)--++(92:-1)--++(88:-1);

\draw[ultra thin,gray](0,0)++(0:1)++(5:1)++(-5:1)++(90:-1)++(92:-1)++(88:-1)--++(180:1)--++(180:1)--++(180:1);

\draw[ultra thin,gray](0,0)++(0:1)++(5:1)++(-5:1)++(90:-1)++(92:-1)++(88:-1)--++(0:1)--++(0:1)--++(0:1);

\draw[ultra thin,gray](0,0)++(0:1)++(5:1)++(-5:1)++(90:-1)++(92:-1)--++(180:1)--++(180:1)--++(180:1);

\draw[ultra thin,gray](0,0)++(0:1)++(5:1)++(-5:1)++(90:-1)++(92:-1)--++(0:1)--++(0:1)--++(0:1);

\draw[ultra thin,gray](0,0)++(0:1)++(5:1)++(-5:1)++(90:-1)--++(180:1)--++(180:1)--++(180:1);

\draw[ultra thin,gray](0,0)++(0:1)++(5:1)++(-5:1)++(90:-1)--++(0:1)--++(0:1)--++(0:1);

\draw[fill=black](0,0)++(0:1)++(5:1)++(-5:1)++(90:-1)++(180:1)circle(.05)++(180:1)circle(.05)++(180:1)circle(.05);

\draw[fill=black](0,0)++(0:1)++(5:1)++(-5:1)++(90:-1)++(0:1)circle(.05)++(0:1)circle(.05)++(0:1)circle(.05);

\draw[fill=black](0,0)++(0:1)++(5:1)++(-5:1)++(90:-1)++(92:-1)++(180:1)circle(.05)++(180:1)circle(.05)++(180:1)circle(.05);

\draw[fill=black](0,0)++(0:1)++(5:1)++(-5:1)++(90:-1)++(92:-1)++(0:1)circle(.05)++(0:1)circle(.05)++(0:1)circle(.05);

\draw[fill=black](0,0)++(0:1)++(5:1)++(-5:1)++(90:-1)++(92:-1)++(88:-1)++(180:1)circle(.05)++(180:1)circle(.05)++(180:1)circle(.05);

\draw[fill=black](0,0)++(0:1)++(5:1)++(-5:1)++(90:-1)++(92:-1)++(88:-1)++(0:1)circle(.05)++(0:1)circle(.05)++(0:1)circle(.05);

\draw[semithick,green!50!black](0,0)++(0:1)++(5:1)++(-5:1)++(8:1)++(-2:1)--++(0:1)--++(80:.9)--++(112:.8)--++(83.5:.8);

\draw[ultra thin,gray](0,0)++(0:1)++(5:1)++(-5:1)++(8:1)++(-2:1)++(80:.9)--++(0:1);

\draw[semithick,green!50!black](0,0)++(0:1)++(5:1)++(-5:1)++(8:1)++(-2:1)--++(80:.9)--++(112:.8)--++(83.5:.8);

\draw[semithick,green!50!black](0,0)++(0:1)++(5:1)++(-5:1)++(8:1)--++(90:1.1)--++(90.5:.8)--++(89:.9);
\draw[semithick,green!50!black](0,0)--++(89.5:1)--++(92:.8)--++(88:1);

\draw[semithick,green!50!black](0,0)++(0:1)--++(89.5:1)--++(92:.9)--++(88:.9);

\draw[semithick,green!50!black](0,0)++(0:1)++(5:1)--++(89.5:1)--++(92:.9)--++(88:.9);


\draw[semithick,red](0,0)--++(0:1)--++(5:1)--++(-5:1)--++(8:1)--++(-2:1)--++(0:1);

\draw[ultra thin,gray](0,0)++(0:1)++(5:1)++(-5:1)++(90:1)--++(10:1)--++(30:1)--++(0:1);

\draw[ultra thin,gray](0,0)++(0:1)++(5:1)++(-5:1)++(90:1)++(92:1)--++(0:1);

\draw[ultra thin,gray](0,0)++(0:1)++(5:1)++(-5:1)++(90:1)++(92:1)--++(0:-1)--++(5:-1)--++(10:-1);

\draw[ultra thin,gray](0,0)++(0:1)++(5:1)++(-5:1)++(90:1)++(92:1)++(88:1)--++(-5:1)--++(-20:1)--++(2:1);

\draw[ultra thin,gray](0,0)++(0:1)++(5:1)++(-5:1)++(90:1)++(92:1)++(88:1)--++(5:-1)--++(10:-1)--++(-2:-1);

\draw[fill=black](0,0)++(0:1)++(5:1)++(-5:1)++(90:1)++(92:1)++(88:1)++(5:-1)circle(.05)++(10:-1)circle(.05)++(-2:-1)circle(.05);

\draw[fill=black](0,0)++(0:1)++(5:1)++(-5:1)++(8:1)++(-2:1)++(0:1)++(80:.9)circle(.05);

\draw[fill=black](0,0)++(0:1)++(5:1)++(-5:1)++(90:1)++(92:1)++(88:1)++(-5:1)circle(.05)++(-20:1)circle(.05)++(2:1)circle(.05);

\draw[ultra thin,gray](0,0)++(0:1)++(5:1)++(-5:1)++(90:1)--++(10:-1)--++(0:-1)--++(-5:-1);

\draw[semithick,green!50!black](0,0)++(0:1)++(5:1)++(-5:1)--++(90:1)--++(92:1)--++(88:1);

\draw[semithick,green!50!black](0,0)++(0:1)++(5:1)++(-5:1)--++(90:1)--++(92:1)--++(88:1);
\draw[semithick,green!50!black](0,0)++(0:1)++(5:1)++(-5:1)--++(90:-1)--++(92:-1)--++(88:-1);

\draw[fill=black](0,0)++(0:1)++(5:1)++(-5:1)++(90:1)++(92:1)++(0:1)circle(.05);

\draw[fill=black](0,0)++(0:1)++(5:1)++(-5:1)++(90:1)++(92:1)circle(.05)++(0:-1)circle(.05)++(5:-1)circle(.05)++(10:-1)circle(.05);

\draw[fill=black](0,0)++(0:1)++(5:1)++(-5:1)++(8:1)++(-2:1)++(80:.9)circle(.05);

\draw[fill=black](0,0)++(0:1)++(5:1)++(-5:1)++(90:1)++(10:-1)circle(.05)++(0:-1)circle(.05)++(-5:-1)circle(.05);

\draw[fill=black](0,0)++(0:1)++(5:1)++(-5:1)++(90:1)++(10:1)circle(.05)++(30:1)circle(.05)++(0:1)circle(.05);

\draw[fill=black](0,0)++(0:1)circle(.05)++(5:1)circle(.05)++(-5:1)circle(.05)++(90:1)circle(.05)++(92:1)circle(.05)++(88:1)circle(.05);

\draw[fill=black](0,0)circle(.05)++(0:1)circle(.05)++(5:1)circle(.05)++(-5:1)circle(.05)++(90:-1)circle(.05)++(92:-1)circle(.05)++(88:-1)circle(.05);

\draw[fill=black](0,0)circle(.05)++(0:1)circle(.05)++(5:1)circle(.05)++(-5:1)circle(.05)++(8:1)circle(.05)++(-2:1)circle(.05)++(0:1)circle(.05);

\draw(0,0) node[anchor=east]{$s$};

\end{tikzpicture}
\caption{The stratum $s$, in red, and its orthogonal strata $\mathcal{S}^\perp(s)$ in green. One $s'\in \mathcal{S}^\perp(s)$ is encircled.}
\label{fig:orthogonal}
\end{figure}

\begin{lemma}(Small angle excess \BBB for regular graphs\EEE)\label{lemma:excess} Let $G=(V,E)$ be an $\eps$-regular graph. The following implications hold true:
\begin{itemize}
\item[\rm (i)] If  $\displaystyle\max_{\gamma \in \Gamma} \theta_{\mathrm{ex}}(\gamma) < \frac{3\pi}{2}-\eps   $, then all \BBB $\gamma \in \Gamma$ are open\EEE;
\item[\rm (ii)]  If $\displaystyle \max_{\gamma \in \Gamma} \theta_{\mathrm{ex}}(\gamma) < \frac{\pi}{2}-\varepsilon$, then $\#\mathcal{S}^\perp(s) = l(s)$ for all $s \in \mathcal{S}$;
\item[\rm (iii)] If $\displaystyle \max_{\gamma \in \Gamma} \theta_{\mathrm{ex}}(\gamma) < \frac{\pi}{6}-\varepsilon$, then $s_1 \cap s_2= \emptyset$ for all $s_1,s_2 \in \mathcal{S}^\perp(s)$ and for all $s \in \mathcal{S}$.
\end{itemize}
\end{lemma}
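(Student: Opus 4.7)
The plan is to apply the Umlaufsatz for simple closed polygonal curves in the plane: the sum of signed exterior angles equals $\pm 2\pi$. In each of the three parts I would assume by contradiction that the conclusion fails, extract a simple closed polygon built out of one, two, or three straight-path sub-arcs meeting at ``corner'' vertices, and then derive a contradiction by bounding the total (unsigned) turning above using the hypothesis on $\theta_{\mathrm{ex}}$, while each corner contributes at most $\pi/2 + \eps$ in absolute value thanks to $\eps$-regularity. The key geometric observation for the corner bound is that at any vertex lying on two \emph{distinct} strata, the angle between edges of one stratum and edges of the other cannot be close to $\pi$ --- otherwise the two strata would merge into a single maximal straight path. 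Thus by $\eps$-regularity the corner angle lies in $[\pi/2-\eps,\pi/2+\eps] \cup [3\pi/2-\eps,3\pi/2+\eps]$, giving $|\pi - \theta_{\mathrm{corner}}| \leq \pi/2 + \eps$.

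For (i), let $\gamma = (x_1,\ldots,x_N)$ be a closed straight path ($x_1 = x_N$) with $\theta_{\mathrm{ex}}(\gamma) < \tfrac{3\pi}{2}-\eps$. If some interior vertex of $\gamma$ repeats, I extract a minimal closed sub-walk, which is simple and has angle excess at most $\theta_{\mathrm{ex}}(\gamma)$; otherwise $\gamma$ itself is simple. The signed exterior-angle sum of this simple closed polygon equals $\pm 2\pi$, and its absolute value is bounded by $\theta_{\mathrm{ex}}(\gamma) + |\pi - \theta_\star|$, where $\theta_\star$ is the bond angle at the closing vertex $x_1 = x_N$. Since $|\pi - \theta_\star| \leq \pi/2 + \eps$ by $\eps$-regularity, the estimate $2\pi \leq \theta_{\mathrm{ex}}(\gamma) + \pi/2 + \eps$ contradicts the hypothesis.

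For (ii), the hypothesis $\max_{\gamma \in \Gamma} \theta_{\mathrm{ex}}(\gamma) < \pi/2 - \eps$ implies via (i) that all straight paths are open, so by the proof of Lemma~\ref{lemma:opengraph}(i) every vertex belongs to exactly two strata. Thus for each $x \in s$ there is a unique $s_x \in \mathcal{S}^\perp(s)$ passing through $x$. If distinct $x, y \in s$ satisfied $s_x = s_y =: s'$, I choose the pair so that the sub-arc of $s$ from $x$ to $y$ and the sub-arc of $s'$ from $y$ to $x$ meet only in $\{x,y\}$; together they form a simple closed curve with two corners at $x$ and $y$. The Umlaufsatz then yields $2\pi \leq 2\max_{\gamma \in \Gamma}\theta_{\mathrm{ex}}(\gamma) + 2(\pi/2+\eps)$, that is $\max_{\gamma\in\Gamma} \theta_{\mathrm{ex}}(\gamma) \geq \pi/2 - \eps$, contradicting the hypothesis. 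The case of degenerate strata $s=(x)$ with $x \in V_0 \cup V_1 \cup V_2^\pi$ is handled directly from the definition of $\mathcal{S}^\perp(s)$.

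For (iii), by (ii) each $s' \in \mathcal{S}^\perp(s)$ meets $s$ at exactly one vertex. Suppose for contradiction that $s_1 \neq s_2$ in $\mathcal{S}^\perp(s)$ share some $y \in s_1 \cap s_2$, and let $x_i$ denote the unique intersection of $s_i$ with $s$. Choosing the sub-arcs of $s$ from $x_1$ to $x_2$, of $s_2$ from $x_2$ to $y$, and of $s_1$ from $y$ to $x_1$ minimally, they form a simple closed triangle-shaped curve with three corners. The Umlaufsatz yields $2\pi \leq 3\max_{\gamma \in \Gamma}\theta_{\mathrm{ex}}(\gamma) + 3(\pi/2+\eps)$, i.e.~$\max_{\gamma\in\Gamma} \theta_{\mathrm{ex}}(\gamma) \geq \pi/6 - \eps$, contradicting the hypothesis. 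The main obstacle common to all three parts is ensuring that the constructed closed curves are genuinely simple so that the Umlaufsatz applies with winding $\pm 1$; this is achieved by extracting innermost or minimal sub-arcs in each case.
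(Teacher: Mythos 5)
Your proposal is correct and follows essentially the same route as the paper: the Umlaufsatz you invoke is just the exterior-angle form of the interior-angle-sum identity the paper uses, and in each part you extract the same minimal closed polygon from one, two, or three straight sub-arcs and bound each corner's deviation from $\pi$ by $\pi/2+\eps$ via $\eps$-regularity. The only cosmetic difference is your extra observation that corner angles must lie near $\pi/2$ or $3\pi/2$ (the paper only needs, and only uses, the weaker bound $|\theta-\pi|\le \pi/2+\eps$).
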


\begin{proof}  We first introduce some notation that will be used throughout the proof.   Let $ p = \{x_1,\ldots,x_N\}$ be such that the edges $e_i=\{x_i,x_{i+1}\}$, $i=1,\ldots,N-1$,  form  a closed \BBB simple \EEE polygon.  We denote by $\theta(e_i,e_{i+1})$ the interior angle formed by the edges $e_i  $ and $e_{i+1}$, $i=1,\ldots,N-1$,   with the convention    $e_{1}=e_N$. By the interior angle sum of polygons it holds 
\begin{align}\label{eq:intsum}
\sum_{i=2}^{ N} ( \theta(e_{i-1},e_{i})  -\pi) =-2\pi\,.
\end{align}
 For the reader's convenience, the proof of the three different statements is aided by Figure~\ref{fig:threecases}. \EEE
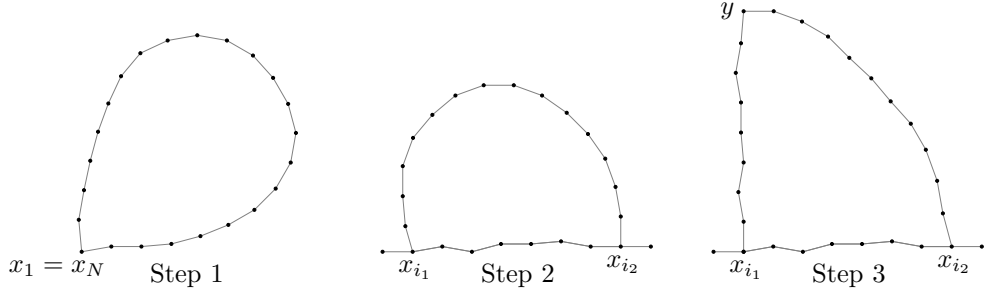
\begin{figure}[htp]
\begin{tikzpicture}[scale=.4]

\draw(-.8,0) node[anchor=north]{$x_1=x_N$};

\draw[ultra thin, gray](0,0)--++(10:1)--++(0:1)--++(5:1)--++(15:1)--++(22:1)--++(30:1)--++(45:1)--++(60:1)--++(80:1)--++(105:1)--++(120:1)--++(132:1)--++(150:1)--++(170:1)--++(190:1)--++(205:1)--++(230:1)--++(245:1)--++(250:1)--++(255:1)--++(258:1)--++(260:1)--(0,0);

\draw[fill=black](0,0)circle(.05)++(10:1)circle(.05)++(0:1)circle(.05)++(5:1)circle(.05)++(15:1)circle(.05)++(22:1)circle(.05)++(30:1)circle(.05)++(45:1)circle(.05)++(60:1)circle(.05)++(80:1)circle(.05)++(105:1)circle(.05)++(120:1)circle(.05)++(132:1)circle(.05)++(150:1)circle(.05)++(170:1)circle(.05)++(190:1)circle(.05)++(205:1)circle(.05)++(230:1)circle(.05)++(245:1)circle(.05)++(250:1)circle(.05)++(255:1)circle(.05)++(258:1)circle(.05)++(260:1)circle(.05);

\draw(3.5,-1.5) node[anchor=south]{Step 1};

\begin{scope}[shift={(11,0)}]

\draw[ultra thin,gray](0,0)--++(180:1);

\draw[ultra thin, gray](0,0)--++(10:1)--++(-10:1)--++(15:1)--++(0:1)--++(5:1)--++(-10:1)--++(0:1)--++(0:1);

\draw[fill=black](0,0)++(180:1) circle(.05);

\draw[fill=black](0,0)++(10:1)++(-10:1)++(15:1)++(0:1)++(5:1)++(-10:1)++(0:1)++(0:1) circle(.05);

\draw(.1,0) node[anchor=north]{$x_{i_1}$};

\draw(0,0)++(10:1.1)++(-10:1)++(15:1)++(0:1)++(5:1)++(-10:1)++(0:1) node[anchor=north]{$x_{i_2}$};

\draw(3.5,-1.5) node[anchor=south]{Step 2};

\draw[ultra thin, gray](0,0)--++(10:1)--++(-10:1)--++(15:1)--++(0:1)--++(5:1)--++(-10:1)--++(0:1)--++(90:1)--++(100:1)--++(110:1)--++(125:1)--++(135:1)--++(145:1)--++(160:1)--++(180:1)--++(200:1)--++(220:1)--++(230:1)--++(250:1)--++(270:1)--++(275:1)--(0,0);

\draw[fill=black](0,0)circle(.05)++(10:1)circle(.05)++(-10:1)circle(.05)++(15:1)circle(.05)++(0:1)circle(.05)++(5:1)circle(.05)++(-10:1)circle(.05)++(0:1)circle(.05)++(90:1)circle(.05)++(100:1)circle(.05)++(110:1)circle(.05)++(125:1)circle(.05)++(135:1)circle(.05)++(145:1)circle(.05)++(160:1)circle(.05)++(180:1)circle(.05)++(200:1)circle(.05)++(220:1)circle(.05)++(230:1)circle(.05)++(250:1)circle(.05)++(270:1)circle(.05)++(275:1)circle(.05)(0,0);;

\end{scope}

\begin{scope}[shift={(22,0)}]

\draw[ultra thin,gray](0,0)--++(180:1);

\draw[ultra thin, gray](0,0)--++(10:1)--++(-10:1)--++(15:1)--++(0:1)--++(5:1)--++(-10:1)--++(0:1)--++(0:1);

\draw[fill=black](0,0)++(180:1) circle(.05);

\draw[fill=black](0,0)++(10:1)++(-10:1)++(15:1)++(0:1)++(5:1)++(-10:1)++(0:1)++(0:1) circle(.05);

\draw(0,8) node[anchor=east]{$y$};

\draw(.1,0) node[anchor=north]{$x_{i_1}$};

\draw(0,0)++(10:1.1)++(-10:1)++(15:1)++(0:1)++(5:1)++(-10:1)++(0:1) node[anchor=north]{$x_{i_2}$};

\draw(3.5,-1.5) node[anchor=south]{Step 3};

\draw[ultra thin, gray](0,0)--++(10:1)--++(-10:1)--++(15:1)--++(0:1)--++(5:1)--++(-10:1)--++(0:1);

\draw[ultra thin, gray](0,0)--++(90:1)--++(100:1)--++(80:1)--++(95:1)--++(90:1)--++(100:1)--++(80:1)--(0,8);

\draw[ultra thin, gray](0,8)--++(0:1)--++(-20:1)--++(-30:1)--++(-45:1)--++(-43:1)--++(-50:1)--++(-48:1)--++(-60:1)--++(-70:1.1)--++(-80:1.1)--++(-75:1.1);

\draw[fill=black](0,8)++(0:1)circle(.05)++(-20:1)circle(.05)++(-30:1)circle(.05)++(-45:1)circle(.05)++(-43:1)circle(.05)++(-50:1)circle(.05)++(-48:1)circle(.05)++(-60:1)circle(.05)++(-70:1.1)circle(.05)++(-80:1.1)circle(.05)++(-75:1.1);

\draw[fill=black](0,8)circle(.05);

\draw[fill=black](0,0)circle(.05)++(90:1)circle(.05)++(100:1)circle(.05)++(80:1)circle(.05)++(95:1)circle(.05)++(90:1)circle(.05)++(100:1)circle(.05)++(80:1)circle(.05);

\draw[fill=black](0,0)circle(.05)++(10:1)circle(.05)++(-10:1)circle(.05)++(15:1)circle(.05)++(0:1)circle(.05)++(5:1)circle(.05)++(-10:1)circle(.05)++(0:1)circle(.05);

\end{scope}

\end{tikzpicture}
\caption{The three cases discussed in Steps~1--3.}
\label{fig:threecases}
\end{figure}
\begin{step}{1}(Proof of (i))   Assume by contradiction that $\theta_{\mathrm{ex}}(\gamma) < \frac{3\pi}{2}-\eps  $ for all $\gamma \in \Gamma$ and that there exists $\gamma \in \Gamma$ closed. Let \BBB $\gamma =(x_1,\ldots,x_N)  \in \Gamma$ \EEE be a minimal (w.r.t.\ set inclusion)  closed path. \BBB Since the graph is planar,  the edges $e_i=\{x_i,x_{i+1}\}$, $i=1,\ldots,N-1$, form  a closed simple polygon. \EEE Therefore, by \eqref{eq:intsum} and the triangle inequality, we have $\theta_{\mathrm{ex}}(\gamma) + |\theta(e_{N-1}, e_1)  - \pi| \geq 2\pi$. Since $|\theta(e_{N-1}, e_1) - \pi| \le \frac{\pi}{2}+\eps$ by Definition~\ref{def:epsregular}(ii), this  yields a contradiction and concludes Step 1.
\end{step}\\
\begin{step}{2}(Proof of (ii))   Assume by contradiction that $\theta_{\mathrm{ex}}(\gamma) < \frac{\pi}{2}-\varepsilon$ for all $\gamma \in \Gamma$ and that there exists \BBB $s=(x_1,\ldots,x_N) \in \mathcal{S}$ \EEE with $\#\mathcal{S}^\perp(s) <l(s)$. This implies that $N \geq 2$.  Moreover, there exists $s'\in \mathcal{S}^\perp(s)$ and $1\leq i_1<i_2\leq N$ such that $ \{x_{i_1},x_{i_2}\} \subset s'\cap s $. Let us consider $\gamma \subset s'$ connecting $x_{i_1}$ and $x_{i_2}$ such that $\gamma \cap s =  \{x_{i_1},x_{i_2}\} $. We now consider \BBB $p =(y_1,\ldots,y_M) = \phi \cup \gamma$, where $\phi:=(x_{i_1},\ldots, x_{i_2}) \subset s$\EEE,   and observe that its edges $e_i$, $i=1,\ldots,M-1$,  form a closed polygon. Let $y_j = x_{i_1}$ and $y_k = x_{i_2}$. Note that   $|\theta(e_{j-1},e_j)-\pi|,|\theta(e_{k-1},e_k)-\pi|\leq \frac{\pi}{2}+\varepsilon$ by Definition~\ref{def:epsregular}(ii).  Identity \eqref{eq:intsum}   applied to $p$ implies  $\sum_{i=2}^{ M } ( \theta(e_{i-1},e_{i})-\pi) =-2\pi$.  Furthermore,  $\phi,\gamma \in \Gamma$ and therefore  we obtain
\begin{align*}
2\pi=\left|\sum_{i=2}^M( \theta(e_{i-1},e_{i})  -\pi)\right| &\leq  | \theta(e_{j-1},e_j) -\pi| + | \theta(e_{k-1},e_k)-\pi | +  \underset{ i\neq \{j,k\}}{\sum_{i=2}^M}|  \theta(e_{i-1},e_{i}) -\pi| \\&\leq \pi +2\varepsilon +  \underset{i\neq \{j,k\}}{\sum_{i=2}^M}|  \theta(e_{i-1},e_{i})  -\pi| = \pi +2\varepsilon + \theta_{\mathrm{ex}}(\gamma) + \theta_{\mathrm{ex}}(\phi)  \,.
\end{align*}
This implies  that   $\theta_{\mathrm{ex}}(\gamma) \geq \pi/2-\varepsilon$ or $\theta_{\mathrm{ex}}(\phi) \geq \pi/2-\varepsilon$ and yields therefore a contradiction.
\end{step}\\
\begin{step}{3}(Proof of (iii)) Assume by contradiction that $\theta_{\mathrm{ex}}(\gamma) < \pi/6 -\varepsilon $ for all $\gamma \in \Gamma$ and  that there exists $s\in \mathcal{S}$ and $s_1,s_2 \in \mathcal{S}^\perp(s)$ such that $s_1 \cap s_2 \neq \emptyset$. Writing \BBB $s=(x_1,\ldots,x_N)$ \EEE there exists $i_1<i_2$ such that $s_1 \cap s = \{x_{i_1}\}$ and  $s_2 \cap s = \{x_{i_2}\}$. (Due to Step 2, there is only one point of intersection between $s_i$ and $s$.) Again by Step 2, there holds $\{y\} = s_1 \cap s_2$ for some $y \in V$. Denote by \BBB $\gamma_1 =(y_1,\ldots,y_{l_1}) \subset s$ \EEE  the path connecting $x_{i_1}$ with $x_{i_2}$, \BBB $\gamma_2= (y_{l_1},\ldots,y_{l_2}) \subset s_2$ \EEE the path connecting $x_{i_2}$ with $y$, and by \BBB $\gamma_3= (y_{l_2},\ldots,y_{l_3}) \subset s_1$ \EEE the path connecting $y$ with $x_{i_1}$.  Note that $\gamma_i \in \Gamma$ for $i=1,2,3$. We set $p =\gamma_1 \cup \gamma_2 \cup \gamma_3$ and observe that the edges of $p$ form a closed polygon. Equation \eqref{eq:intsum} applied for $p$ implies $\sum_{i=2}^{l_3}( \theta(e_{i-1},e_{i}) -\pi) =-2\pi$. Note that $| \theta(e_{l_k-1},e_{l_k}) - \pi| \le \frac{\pi}{2}+\varepsilon$  for all $k=1,2,3$ by Definition~\ref{def:epsregular}(ii).  Therefore, we obtain
\begin{align*}
2\pi = \left|\sum_{i=2}^{l_3}( \theta(e_{i-1},e_{i}) -\pi)\right|  &\leq \sum_{k=1}^3 | \theta(e_{l_k-1},e_{l_k}) -\pi| +\underset{i\neq \{l_1,l_2,l_3\}}{\sum_{i=2}^{l_3}}\left| \theta(e_{i-1},e_{i}) -\pi\right|\\&\leq \frac{3\pi}{2} +3\varepsilon + \sum_{k=1}^3 \theta_{\mathrm{ex}}(\gamma_k)\,.
\end{align*}
This implies that there exists $k\in \{1,2,3\}$ such that $\theta_{\mathrm{ex}}(\gamma_k) \geq \pi/6-\varepsilon$, a contradiction. 
\end{step}
\end{proof}

 We now come to the \textit{stratification} of bond graphs. The following lemma   allows to reduce the problem of crystallization to a purely geometric problem of minimizing the number of strata in graphs containing only open strata  with small angle excess. \BBB This is the only step in the proof where {($\rm iii_3$)} is needed. \EEE

\begin{lemma}(Construction of a \BBB regular \EEE graph with small angle excess)\label{lemma:construction} Let $ G=(V,E)$ be $\varepsilon$-regular. Then, there exists $G_\mathrm{o}= (V,E_\mathrm{o})$ with $ E_\mathrm{o} \subset E$ such that
\begin{itemize}
\item[\rm (i)] $\max_{\gamma \in \Gamma(G_\mathrm{o})} \theta_{\mathrm{ex}}(\gamma) < \frac{\pi}{6}-\varepsilon$\,;
\item[\rm (ii)]  $G_{\mathrm{o}}$ satisfies
\begin{align*}
F(G) \geq  F_{\mathrm{bond}}(  G_{\mathrm{o}})  \EEE \,
\end{align*}
with equality only if $E = E_\mathrm{o}$,  $|x-y|=1$ for all $x \in V$, $y\in \mathcal{N}(x,E)$, and $\theta \in \{\pi/2,\pi,3\pi/2\}$ for all bond angles $\theta$.
\end{itemize}
\end{lemma}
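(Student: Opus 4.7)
The strategy is to construct $G_\mathrm{o}$ via a greedy edge-deletion procedure that shatters every straight path whose angle excess exceeds the threshold $D := \pi/6 - \varepsilon$, and then to pay for the resulting loss of bonds using the three-body excess energy via the slope condition $(\mathrm{iii}_3)$. Initialize $E_\mathrm{o} := E$ and iterate the following operation: while there exists $\gamma = (x_1, \ldots, x_N) \in \Gamma(G_\mathrm{o})$ with $\theta_{\mathrm{ex}}(\gamma) \geq D$, pick the smallest index $k$ for which $\sum_{i=2}^{k} |\theta_i - \pi| \geq D$ and delete the edge $\{x_k, x_{k+1}\}$ from $E_\mathrm{o}$. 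The procedure terminates after finitely many removals $m$ because $\#E_\mathrm{o}$ strictly decreases, and its stopping criterion is exactly property~(i). Any closed straight path $\gamma$ is automatically processed in the loop, since identity~\eqref{eq:intsum} applied to the associated polygon together with Definition~\ref{def:epsregular}(ii) forces $\theta_{\mathrm{ex}}(\gamma) \geq 3\pi/2 - \varepsilon > D$.

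Each edge removal increases $F_{\mathrm{bond}}$ by exactly $2$ (one per endpoint) and does not increase $F_{\mathrm{ex}}$, so $F_{\mathrm{bond}}(G_\mathrm{o}) = F_{\mathrm{bond}}(G) + 2m$. The central charging argument is: within each maximal straight path $\gamma \in \mathcal{S}_\Gamma$ of $G$, the indices consumed by the successive greedy cuts form disjoint blocks among the angle indices of $\gamma$, and each block carries an angle-deviation sum of at least $D$, which yields $m_\gamma \leq \theta_{\mathrm{ex}}(\gamma)/D$ in the open case (with an analogous bound incorporating the wrap-around angle in the closed case). Condition $(\mathrm{iii}_3)$ ensures that every straight bond angle $\theta \in [\pi - \varepsilon, \pi + \varepsilon]$ contributes at least $4 D^{-1} |\theta - \pi|$ to $F_{\mathrm{ex}}(G)$, and since each such angle belongs to a unique maximal straight path, the total $v_3$ contribution of $\gamma$'s angles to $F_{\mathrm{ex}}(G)$ is at least $4 m_\gamma$. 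Summing over $\gamma$ gives
\begin{equation*}
F_{\mathrm{ex}}(G) \;\geq\; 4m,
\end{equation*}
whence $F(G) = F_{\mathrm{bond}}(G) + F_{\mathrm{ex}}(G) \geq F_{\mathrm{bond}}(G) + 4m \geq F_{\mathrm{bond}}(G_\mathrm{o}) + 2m \geq F_{\mathrm{bond}}(G_\mathrm{o})$, which establishes~(ii).

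For the equality case, the chain of inequalities must be tight throughout, which forces $m = 0$ (so $E = E_\mathrm{o}$) and $F_{\mathrm{ex}}(G) = 0$; together with the non-negativity of each summand and conditions $(\mathrm{i}_2)$, $(\mathrm{ii}_3)$, the second identity yields $|x - y| = 1$ for every $\{x, y\} \in E$ and every bond angle lying in $\{\pi/2, \pi, 3\pi/2\}$. The main obstacle is the book-keeping for closed straight paths, where the wrap-around angle at $x_1 = x_N$ does not contribute to $\theta_{\mathrm{ex}}$ by definition; after the first greedy cut, a closed path either unfolds into a single open straight path (when the wrap-around angle itself lies in $[\pi - \varepsilon, \pi + \varepsilon]$, in which case $v_3$ at this angle supplies the extra budget via $(\mathrm{iii}_3)$) or splits into two disjoint open sub-paths (when the wrap-around is a near-right angle, in which case the left sub-path produced by the greedy cut is automatically admissible), and a careful case analysis confirms in both scenarios that the greedy blocks remain disjoint and pay for the count $m_\gamma$.
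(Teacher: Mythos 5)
Your overall strategy is the same as the paper's: iteratively delete edges from straight paths whose angle excess reaches the threshold $D=\pi/6-\varepsilon$, and pay for the resulting increase of $F_{\mathrm{bond}}$ with the three-body excess supplied by $(\mathrm{iii}_3)$. The difference lies only in the cutting rule: the paper selects a \emph{minimal} bad path and removes \emph{both} of its end edges, so each step costs exactly $4$ bond units and is paid by $F_{\mathrm{ex}}(\gamma_k)>4$, while you remove a single edge at the first index where the accumulated deviation reaches $D$. However, your version has a genuine gap at its central step, namely the assertion that the greedy blocks are pairwise disjoint.

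As written, the loop processes an \emph{arbitrary} bad path $\gamma\in\Gamma(G_{\mathrm{o}})$, and the block of a cut starts at the first interior angle of the \emph{chosen} path, not at the left end of the current maximal segment. If the chosen path begins strictly inside its segment, the left remnant after the cut consists of the block minus its last angle (deviation $<D$) \emph{together with} all angles to the left of the chosen path, so it can again have excess $\ge D$, and the next block then re-uses almost all of the previous one. Concretely, take a single almost-straight chain of unit bonds whose deviations are: a ``core'' of $p$ angles of deviation $(D-\delta)/p$ each, followed by one angle of deviation $\delta$, and preceded by $p$ further angles of deviation $(D-\delta)/p$, with $0<\delta<(D-\delta)/p\le\varepsilon$. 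Choosing at step $j$ the bad path that starts exactly $j-1$ vertices to the left of the core, the greedy rule chops off one core angle per step, so the procedure performs $p+1$ cuts although the total deviation is only about $2D$; then $F_{\mathrm{bond}}(G_{\mathrm{o}})=F_{\mathrm{bond}}(G)+2(p+1)$ exceeds $F(G)$ for $p$ large, and (ii) fails for this execution. The repair is to fix the selection rule: always process a \emph{maximal} bad path (equivalently, a whole current segment). Then every block starts at the left end of its segment, the left remnant equals the block minus its last angle and has excess $<D$, and, since by $\varepsilon$-regularity the cut vertex admits no other near-$\pi$ continuation, this remnant can never again belong to a bad path; disjointness of the blocks, hence $m_\gamma\le\theta_{\mathrm{ex}}(\gamma)/D$ and $F_{\mathrm{ex}}(G)\ge 4m$, then follow. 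This locking of the remnant is exactly the role played by the minimal selection of $\gamma_k$ in the paper's proof, and it is also the point you would need to make precise in the closed-path case analysis you only sketch. Your treatment of the equality case is otherwise consistent with the paper's.
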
 
\begin{proof} We construct $G_\mathrm{o}= (V,E_\mathrm{o})$ by iteratively erasing edges. We start by setting $G^0= (V,E)$ and we suppose that $G^k=(V,E^k)$ is already given. We construct $G^{k+1}=(V,E^{k+1})$ by suitably modifying the set of edges $E^k$. If (i) is satisfied, we may stop. Thus, we assume that there exists  $\gamma \in \Gamma(G_k)$ such that $\theta_{\mathrm{ex}}(\gamma) \geq \pi/6-\varepsilon$. Let $\gamma_k \in \Gamma(G_k)$   be minimal (w.r.t.\ set inclusion) such that $\theta_{\mathrm{ex}}(\gamma_k) \geq \pi/6-\varepsilon$, i.e., \BBB $\gamma_k = (x_1,x_2,\ldots,x_{N-1},x_N)$ \EEE and \BBB $\hat{\gamma}_k := (x_2,\ldots,x_{N-1})$ \EEE satisfies $\theta_{\mathrm{ex}}(\hat{\gamma}_k) <\pi/6-\varepsilon$. We define $E^{k+1} := E^k\setminus (\{x_1,x_2\} \cup \{x_{N-1},x_N\})$ and  $G^{k+1}=(V,E^{k+1})$.   Then,
\begin{align}\label{eq:bonds}
\sum_{x \in V} (4-\#\mathcal{N}(x,E^{k+1})) =4+\sum_{x \in V} (4-\#\mathcal{N}(x,E^{k}))\,.
\end{align}
Additionally, due to {($\rm iii_3$)}, with $ L:= 4(\pi/6-\varepsilon)^{-1} >0$ we have
\begin{align}\label{ineq:gamma}
\sum_{ j=2}^{N-1} v_3(\theta_j)   >   L \sum_{j=2}^{N-1}|\theta_j-\pi| \ge L(\pi/6-\varepsilon) =  4\,.
\end{align}
Therefore, due to \eqref{eq:bonds} and \eqref{ineq:gamma}, we have that
\begin{align}\label{ineq:stepk}
F_{\mathrm{ex}}(\gamma_k) + F_{\mathrm{bond}}(G^k) > \BBB F_\mathrm{bond}\EEE(G^{k+1})\,,
\end{align}
where $F_{\mathrm{ex}}(\gamma_k) $ is defined in \eqref{def:Felasticlocal}. 
Since $G=(V,E)$ is finite, the procedure terminates for some $K \in \mathbb{N}$ and we set $G_\mathrm{o}:=(V,E^{K})$. By construction, $G_\mathrm{o}$ satisfies (i). It remains to show (ii). 
Note that, due to the minimal selection of $\gamma_k \in \Gamma$, once $\gamma_k$ is selected this way, we will not select any $\gamma'\subset \gamma_k$ in any successive step  $j >k$. Thus, using \eqref{ineq:stepk} and the previous observation, we have
\begin{align}\label{ineq:K}
\begin{split}
F_{\mathrm{ex}}(G) + F_{\mathrm{bond}}(G) &=F_{\mathrm{ex}}(G^0) + \BBB F_{\mathrm{bond}}\EEE(G^0)   \ge  \sum_{k=0}^{K-1}\EEE F_{\mathrm{ex}}(\gamma_k) + \BBB F_{\mathrm{bond}}\EEE(G^0)   \\& \geq F_{\mathrm{bond}}(G^K)= F_{\mathrm{bond}}(G_\mathrm{o}) \,
\end{split}
\end{align}
with strict inequality whenever $K \ge 1$. In particular,  if equality holds in \eqref{ineq:K}, we  have that $G_\mathrm{o} =G$. This  necessarily gives $F_{\mathrm{ex}}(G)=0$ which implies that $|x-y|=1$ for all $x \in V, y \in \mathcal{N}(x,E)$, and $\theta \in \{\pi/2,\pi,3\pi/2\}$ for all bond angles by  {\rm ($\rm{i}_2$)} and  {\rm ($\rm{ii}_3$)}.  This concludes the proof.
\end{proof}

\section{Proof of the main results}\label{sec: main}

 This section is devoted to the proofs of Theorems \ref{thm:crystallization}--\ref{th: 34law}.

\subsection{Crystallization}  
We will show that the minimum of $F$ is given by $2\lceil 2\sqrt{n}\rceil$, and that it is attained by subsets of $\Z^2$. In view of \eqref{eq: relation}, this shows Theorem \ref{thm:crystallization}. Recall the definition of $G_{\mathrm{nat}}$ in \eqref{eq: relation-new}.  We first state the following upper bound.

\begin{lemma}(Upper bound) \label{lemma:energybound}Let $C_n$ be a minimizer of \eqref{eq: main energy}. Then, $G_{\mathrm{nat}}$ satisfies
\begin{align*}
F(G_{\mathrm{nat}}) \leq 2\lceil 2\sqrt{n}\rceil\,.
\end{align*}
\end{lemma}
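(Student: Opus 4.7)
The plan is to exhibit an explicit lattice competitor $D_n \subset \Z^2$ with $\#D_n = n$ whose natural bond graph has low energy, and then transfer the bound to $G_{\mathrm{nat}}(C_n)$ via minimality. Since $C_n$ minimizes $\mathcal{F}$, applying the identity \eqref{eq: relation} to both $C_n$ and $D_n$ gives
\[
F(G_{\mathrm{nat}}(C_n)) = 2\mathcal{F}(C_n) + 4n \le 2\mathcal{F}(D_n) + 4n = F(G_{\mathrm{nat}}(D_n)),
\]
so it suffices to produce one $D_n$ with $F(G_{\mathrm{nat}}(D_n)) \le 2\lceil 2\sqrt n\rceil$.

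Set $m := \lceil \sqrt n\rceil$, so $(m-1)^2 < n \le m^2$. I would take $D_n$ to be a near-square rectangular lattice region built by filling rows: if $(m-1)^2 < n \le m(m-1)$, take
\[
D_n := \bigl(\{1,\ldots,m-1\}\times\{1,\ldots,m-1\}\bigr) \cup \bigl\{(i,m) : 1 \le i \le n-(m-1)^2\bigr\},
\]
an $(m-1)\times(m-1)$ square together with a partial extra column of length $n-(m-1)^2$; if $m(m-1) < n \le m^2$, take
\[
D_n := \bigl(\{1,\ldots,m\}\times\{1,\ldots,m-1\}\bigr) \cup \bigl\{(i,m) : 1 \le i \le n-m(m-1)\bigr\},
\]
an $m\times(m-1)$ rectangle together with a partial extra row of length $n-m(m-1)$. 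Since $D_n \subset \Z^2$ and $r_0<\sqrt 2$ by $(\mathrm{ii}_2)$, the edges of $G_{\mathrm{nat}}(D_n)$ are exactly the pairs at distance $1$; combined with $(\mathrm{i}_2)$, which gives $v_2(1)+1 = 0$ for each such bond, and the observation that every bond angle in $D_n$ lies in $\{\pi/2,\pi,3\pi/2\}$ so that $v_3$ vanishes on it by $(\mathrm{ii}_3)$, this forces $F_{\mathrm{ex}}(G_{\mathrm{nat}}(D_n)) = 0$. Hence $F(G_{\mathrm{nat}}(D_n)) = F_{\mathrm{bond}}(G_{\mathrm{nat}}(D_n))$.

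The final step is a direct combinatorial count of unit-length edges in $D_n$, which yields $F_{\mathrm{bond}}(G_{\mathrm{nat}}(D_n)) = 4m-2$ in the first case and $F_{\mathrm{bond}}(G_{\mathrm{nat}}(D_n)) = 4m$ in the second. To match these values with $2\lceil 2\sqrt n\rceil$, one checks that in the first range $m-1 < \sqrt n \le \sqrt{m(m-1)} < m-\tfrac12$, whence $\lceil 2\sqrt n\rceil = 2m-1$, while in the second range $n \ge m^2-m+1 > (m-\tfrac12)^2$ implies $m-\tfrac12 < \sqrt n \le m$, whence $\lceil 2\sqrt n\rceil = 2m$. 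Thus $F(G_{\mathrm{nat}}(D_n)) \le 2\lceil 2\sqrt n\rceil$ in either case, completing the proof.

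There is no genuine obstacle here: the argument is a routine calculation with an explicit lattice competitor. The only point requiring mild care is aligning the case split with the jump of $\lceil 2\sqrt n\rceil$, which is why the two ranges are designed to meet exactly at $n = m(m-1)$.
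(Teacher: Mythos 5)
Your proposal is correct and follows the same strategy as the paper, which simply constructs an explicit competitor in $\Z^2$ and invokes minimality together with \eqref{eq: relation}; the paper delegates the details of the construction and the edge count to \cite[Section 4]{Mainini-Piovano}, whereas you carry them out explicitly (and your counts $4m-2$, resp.\ $4m$, and the matching of the case split with the jump of $\lceil 2\sqrt{n}\rceil$ at $n=m(m-1)$ all check out). The only cosmetic slip is calling the added strip $\{(i,m):1\le i\le n-(m-1)^2\}$ a ``column'' when it is a row; this does not affect the argument.
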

\begin{proof}
The statement is obtained  by direct construction of configurations $C_n$ with $C_n\subset \mathbb{Z}^2 $ satisfying the energy bound. We refer to \cite[Section 4]{Mainini-Piovano} for details, see also Figure \ref{fig:groundstates}.  
\end{proof}

 The core of the proof now consists in proving a lower bound. 

\begin{proof}[Proof of Theorem \ref{thm:crystallization}] Let $C_n$ be a minimizer of \eqref{eq: main energy}.  Then $G_{\mathrm{nat}}$ is $\eps$-regular by Lemma \ref{lemma:elementaryprop}.  We denote by $G_\mathrm{o}=(V,E_\mathrm{o})$ the graph obtained in Lemma \ref{lemma:construction}. The graph $G_\mathrm{o}$ is also $\varepsilon$-regular and satisfies
\begin{align}\label{ineq:prop}
\max_{\gamma \in \Gamma(G_o)}\theta_{\mathrm{ex}}(\gamma) < \pi/6-\varepsilon\,. 
\end{align} 
 The main part of the proof consists in verifying  
\begin{align}\label{eq: main claim}
F_{\mathrm{bond}}(  G_{\mathrm{o}}  ) \geq 2\lceil 2\sqrt{n}\rceil.
\end{align}
 Once  \eqref{eq: main claim} is proven,  we conclude as follows. First, \eqref{eq: main claim0} holds due to Lemma \ref{lemma:construction}(ii)  and \eqref{eq: relation}. To characterize the equality case, we get from  Lemma \ref{lemma:construction} that  $G=G_\mathrm{o}$, that all bond lengths are $1$, and all bond angles lie in $\{\pi/2,\pi,3/2\pi\}$. This shows that each connected component (in the sense of graph theory) of $G$ lies in a rotated and shifted version of $\Z^2$. If there existed more than one connected component, one could obtain a modified configuration with an additional bond. This contradicts minimality, and we therefore obtain $V \subset \mathbb{Z}^2$ after a rigid motion.

We now show  \eqref{eq: main claim}. In the following, we write $\mathcal{S}$ in place of $\mathcal{S}(G_{\mathrm{o}})$ for simplicity.  By  Lemma~\ref{lemma:energybound}, Lemma \ref{lemma:opengraph}(ii), and Lemma \ref{lemma:construction} we have that 
\begin{align}\label{ineq:cardstrata}
2\#\mathcal{S} = F_{\mathrm{bond}}( G_{\mathrm{o}} ) \leq 2\lceil 2\sqrt{n}\rceil\,.
\end{align}
Furthermore, by Lemma \ref{lemma:opengraph}(i) we have
\begin{align*}
\sum_{s \in \mathcal{S}}l(s) = 2n \,.
\end{align*}
Hence, there exists $s_0 \in \mathcal{S}$ such that
\begin{align}\label{eq: lengths_0}
l(s_0) = \max_{s \in \mathcal{S}} l(s) \geq \frac{1}{\#\mathcal{S}} \sum_{s \in \mathcal{S}} l(s) \geq  \frac{2n}{\lceil 2\sqrt{n}\rceil}\,.
\end{align}
 Recall Definition \ref{def: strata-ort} and define   $l^v:=\max_{s \in \mathcal{S}^\perp(s_0)}l(s)$ and $s^v \in \mathrm{argmax}_{s \in \mathcal{S}^\perp(s_0)} l(s)$. We claim that
\begin{align}\label{ineq:lowerboundnum}
\#(\mathcal{S}^\perp(s_0) \cup \mathcal{S}^\perp(s^v))= \#\mathcal{S}^\perp(s_0) + \#\mathcal{S}^\perp(s^v)  = l(s_0) + l^v .
\end{align}
In fact, by \eqref{ineq:prop} and Lemma~\ref{lemma:excess}(ii) we have that $\#\mathcal{S}^\perp(s_0) = l(s_0)$, $\#\mathcal{S}^\perp(s^v) = l^v$ and, by   Lemma~\ref{lemma:excess}(iii), if $s \in \mathcal{S}^\perp(s_0)$, then $s \notin \mathcal{S}^\perp(s^v)$ (and vice versa). This yields \eqref{ineq:lowerboundnum}. We set $\mathrm{span}(s_0) = \bigcup_{s'\in \mathcal{S}^\perp(s_0)}  s'  \subseteq V$ and we consider two cases:
\begin{itemize}
\item[(a)] $\mathrm{span}(s_0) = V$;
\item[(b)] $\mathrm{span}(s_0) \subsetneq  V$;
\end{itemize}
\noindent \emph{Proof in case (a):} Due to \eqref{ineq:lowerboundnum}, we get
\begin{align*}
\#\mathcal{S} \geq \#(\mathcal{S}^\perp(s_0) \cup \mathcal{S}^\perp(s_v))  = l^v +l(s_0)\,.
\end{align*}
 Now, since $\mathrm{span}(s_0)=V$, we have in particular that  $l(s_0) \cdot l^v \geq n$ and therefore, noting that $\lceil t \rceil < t+1$ we obtain    by Lemma \ref{lemma:opengraph}(ii) \BBB and Young's inequality \EEE
\begin{align}\label{eq: slice}
F_{\mathrm{bond}}( G_{\mathrm{o}}  )  =  2\#\mathcal{S} \geq 2(l(s_0) +l^v) \geq 2\left(l(s_0) + \frac{n}{l(s_0)}\right) \geq 4\sqrt{n} > 2\lceil 2\sqrt{n}\rceil-2\,.
\end{align}
 Since   $2\#\mathcal{S} \in 2\mathbb{N}$, the previous estimate yields the claim \eqref{eq: main claim} in case (a). \BBB Before proceeding with case (b), we would like to mention that the proof of inequality \eqref{eq: slice} is the analogous step to the continuum isoperimetric inequality proved in Theorem \ref{thm: slic}. \EEE \\
\noindent \emph{Proof in case (b):} We claim that in this case we have that 
\begin{align}\label{ineq:cardestimateb}
 \#\mathcal{S}  \ge  l^v + l(s_0)+1\,.
\end{align}
In fact, by definition, there exists $x \in V \setminus \mathrm{span}(s_0)$. Due to Lemma \ref{lemma:excess}(iii), for $s,s'\in \mathcal{S}^\perp(s^v)$ we have that $s \cap s'=\emptyset$,  and thus   there exists at most one $s \in S^\perp(s^v)$ such that $s\cap \{x\} \neq \emptyset$.  We also note that $s'\cap \{x\} = \emptyset$ for all $s'\in \mathcal{S}^\perp(s_0)$.  Since for all $x \in V$ there exist two strata $s, s'$ such that $x\in s, s'$  (see proof of Lemma \ref{lemma:opengraph}(i)),  there exists at least one stratum $s \notin \mathcal{S}^\perp(s_0) \cup S^\perp(s^v)$. Therefore \eqref{ineq:cardestimateb} follows. 

We denote by $\mathcal{S}^\mathrm{a} := \mathcal{S} \setminus (\mathcal{S}^\perp(s_0) \cup \mathcal{S}^\perp(s^v))$ and observe that by \eqref{ineq:cardstrata} and \eqref{ineq:lowerboundnum}  it holds that
\begin{align*}
\#\mathcal{S}^\mathrm{a} = \#\mathcal{S} - l^v-l(s_0)  \le \lceil 2\sqrt{n}\rceil - l^v-l(s_0)\,.
\end{align*}
Now, by Lemma \ref{lemma:excess}(ii) and the choice of $s_0$, see \eqref{eq: lengths_0}, and $s^v$ respectively, we have  $\# \mathcal{S}^\perp(s_0) = l(s_0)$,  $\# \mathcal{S}^\perp(s^v) = l^v$,  $l(s) \le l(s_0)$ for all $s \in \mathcal{S}$,  and $l(s) \le l^v$ for all $s \in \mathcal{S}^\perp(s_0)$. Due to Lemma \ref{lemma:opengraph}(i) and  $\mathcal{S}^\perp(s_0) \cap \mathcal{S}^\perp(s^v) = \emptyset$, we get
\begin{align*}
2l(s_0)\cdot l^v + l(s_0)(\lceil 2\sqrt{n}\rceil - l^v-l(s_0))\geq \sum_{s \in \mathcal{S}^\perp(s_0)}l(s)+ \sum_{s \in \mathcal{S}^\perp(s^v)}l(s)+ \sum_{s \in \mathcal{S}^a}l(s)   =  \sum_{s \in \mathcal{S}}l(s)=2n\,,
\end{align*}
and thus
\begin{align*}
l^v \geq \frac{2n}{l(s_0)} +l(s_0) -\lceil 2\sqrt{n}\rceil.
\end{align*}
This together with   Lemma \ref{lemma:opengraph}(ii),  \eqref{ineq:lowerboundnum},  and $\lceil t\rceil <t+1$ implies
\begin{align*}
F_{\mathrm{bond}}( G_{\mathrm{o}} ) &\geq 2(l^v+l(s_0)+  \#\mathcal{S}^\mathrm{a} \EEE ) \geq 2 \left(\frac{2n}{l(s_0)} +2l(s_0) -\lceil 2\sqrt{n}\rceil+ \#\mathcal{S}^\mathrm{a} \EEE\right) \geq 2\left(4\sqrt{n}-\lceil 2\sqrt{n}\rceil+  \#\mathcal{S}^\mathrm{a} \EEE \right)\\&>2\left(2\lceil2\sqrt{n}\rceil-\lceil 2\sqrt{n}\rceil  + \# \EEE \mathcal{S}^\mathrm{a} -2  \right) =2\lceil 2\sqrt{n}\rceil + 2(\#\mathcal{S}^\mathrm{a} -2)\,.
\end{align*}
Again, since $ 2(l^v+l(s_0)+\mathcal{S}^\mathrm{a} ) \EEE \in 2\mathbb{N}$ and,    $\#\mathcal{S}^\mathrm{a} \ge 1$ by    \eqref{ineq:lowerboundnum}--\eqref{ineq:cardestimateb},  the claim \eqref{eq: main claim} follows also in case (b). This finishes the proof.

Finally, we make the following observation: the argument along with Lemma \ref{lemma:construction}(ii) also shows that $\#\mathcal{S}^\mathrm{a} \ge 2$ would induce that   $G$ is not a ground state. From this and  \eqref{ineq:cardestimateb} we deduce  
\begin{align}\label{ineq:cardestimateb2}
\#\mathcal{S}  =  l^v + l(s_0)+1
\end{align}
 for the number of strata of a ground state $G$ with $\mathrm{span}(s_0) \subsetneq  V$. 
\end{proof}

Estimate \eqref{eq: slice} is related to proving an isoperimetric inequality with respect to the $l^1$-perimeter via slicing. We present a corresponding argument in the continuum setting in Appendix~\ref{appendix2}.

\subsection{The $n^{3/4}$-law}\label{sec:fluctuation}

We close   with a fluctuation estimate for minimizers.

\begin{proof}[Proof of Theorem \ref{th: 34law}]
Clearly, it is enough to prove the statement for $n \ge n_0$ for some $n_0 \in \N$. We use the notation of the previous proof. In particular, we choose $s_0$ and $l^v$ as done  before \eqref{ineq:lowerboundnum}.  As each $x \in V$  belongs to exactly two strata, by \eqref{ineq:cardstrata} we find that $l(s_0) = \max_{s \in \mathcal{S}} l(s) \le  \lceil 2\sqrt{n}\rceil$.
We start by noting that 
\begin{align}\label{eq: a bit different} 
l(s_0) \cdot l^v \geq n  - \lceil 2\sqrt{n}\rceil.
\end{align}
Indeed, if $\mathrm{span}(s_0) = V$, we have $l(s_0) \cdot l^v \geq n$. Otherwise, in view of \eqref{ineq:cardestimateb2}, the span missed \BBB exactly \EEE one stratum, consisting of at most $\lceil 2\sqrt{n}\rceil$ points. 

Now, by \eqref{eq: lengths_0}, \eqref{ineq:lowerboundnum},  and  \eqref{eq: a bit different}  we compute for $n$ sufficiently large
\begin{align*}
4 \sqrt{n} + 2 \ge 2\lceil 2\sqrt{n} \rceil   & = F(G) = 2 \# \mathcal{S} \ge 2 \big( l(s_0) + l^v  \big) \ge 2   l(s_0) + 2\frac{n  - \lceil 2\sqrt{n}\rceil}{l(s_0)}  \\ 
& \ge  2   l(s_0) + 2\frac{n}{l(s_0)}  -  \frac{(2\sqrt{n} + 1)^2}{n}  \ge    2   l(s_0) + 2\frac{n}{l(s_0)}  - 5\,.
\end{align*}
This yields
$$2   l(s_0)^2  - (4 \sqrt{n} + 7)l(s_0)  + 2n \le 0\,.   $$ 
Therefore
$$ x_- \le  l(s_0) \le x_+\,, $$
where
$$x_\pm = \frac{4 \sqrt{n} + 7 \pm \sqrt{(4 \sqrt{n} + 7)^2-16n}}{4}\,. $$
Then, a short computation yields
\begin{align}\label{341}\sqrt{n} - cn^{1/4} \le  l(s_0) \le \sqrt{n} + cn^{1/4}
\end{align}
for some universal $c >0$ large enough. Using again \eqref{eq: a bit different} and $l^v \le l(s_0)$, we also find
\begin{align}\label{342}
\sqrt{n} - cn^{1/4} \le l^v \le \sqrt{n} + cn^{1/4}
\end{align}
for a  larger $c>0$. In view of \eqref{ineq:cardestimateb2}, we get that, up to a translation and up to one stratum, $C_n$   is contained in the rectangular subset of $\Z^2$ defined by 
$$R_n:= \big\{(k_1,k_2) \colon \, k_1 \in \lbrace 1,\ldots, l(s_0)\rbrace, \ \ k_2 \in \lbrace 1, \ldots l^v \rbrace \big\} \,.$$
As each stratum consists of at most $\lceil 2\sqrt{n}\rceil$ points, we 
get 
\begin{align}\label{eq: LLL}
\# (C_n \setminus R_n) \le \lceil 2\sqrt{n}\rceil\,.  
\end{align}
Note that  \eqref{341}--\eqref{342} imply
$$\#  (R_n   \triangle S_n) \le cn^{3/4}. $$
Thus, recalling \eqref{eq: LLL}, to conclude it now suffices to prove that
\begin{align*}
 l(s_0) \cdot l^v   - n \le  5 \EEE \sqrt{n}\,.
 \end{align*}
 Assume by contradiction that  $l(s_0) \cdot l^v   - n > 5 \sqrt{n}$. 
Then, since $l(s_0) \le \lceil 2\sqrt{n}\rceil$, we get
\begin{align*}
4 \sqrt{n} + 2 &\ge 2\lceil 2\sqrt{n} \rceil    = 2 \# \mathcal{S} \ge 2 \big( l(s_0) + l^v  \big) > 2   l(s_0) + 2\frac{n}{l(s_0)} + 2\frac{ 5 \sqrt{n}}{l(s_0)} \ge 2   l(s_0) + 2\frac{n}{l(s_0)} + \frac{ 5}{2}
\end{align*}
for $n$ large enough. Since $2   l(s_0) + 2\frac{n}{l(s_0)}  \ge 4\sqrt{n}$, we obtain a contradiction. This concludes the proof. 
\end{proof}

\section*{Acknowledgements} 
This work was supported by the DFG project FR 4083/3-1 and by the Deutsche Forschungsgemeinschaft (DFG, German Research Foundation) under Germany's Excellence Strategy EXC 2044 -390685587, Mathematics M\"unster: Dynamics--Geometry--Structure. \BBB The research of LK was supported by the DFG through the Emmy Noether Programme (project number 509436910). \EEE

\bigskip

\section*{Conflict of interest}

The authors have no competing interests to declare that are relevant to the content of this article.

 Data sharing not applicable to this article as no datasets were generated or analysed during the current study.

\appendix 
\section{Proof of Lemma \ref{lemma:elementaryprop}}\label{appendix}

\begin{proof}[Proof of Lemma \ref{lemma:elementaryprop}] Let $C_n$ be a minimizer of \eqref{eq: main energy}. For simplicity, we write $G=(V,E)$ instead of $G_{\mathrm{nat}}=(V,E_{\mathrm{nat}})$ for the associated natural bond graph.   \\
\noindent \begin{step}{1} In this step, we show
\begin{align}\label{ineq:1epsregular}
|x-y| \ge  1-\varepsilon \text{ for all }  \lbrace x,y \rbrace \in E\,.
\end{align}
Define  
\begin{align}\label{def:M}
M:= \max_{x \in \mathbb{R}^2} \#(V \cap B_{\frac{1}{2}(1-\varepsilon)}(x))\,.
\end{align}
It suffices to  show $M=1$. Let $ x_0 \in \R^2$ be a maximizer. After translation of $V$, it is not restrictive to  assume  that $x_0=0$. By assumption {\rm ($\rm{iii}_2$)}  we have 
\begin{align}\label{estimate:insideBall}
\underset{\{x,y\} \in E}{\sum_{x,y \in B_{\frac{1}{2}(1-\varepsilon)}}} v_2(|x-y|)\geq \frac{1}{2\varepsilon}M(M-1)\,.
\end{align}
Consider the annulus $ A_\eps \EEE:=B_{\frac{1}{2}(1-\varepsilon) +\sqrt{2}} \setminus B_{\frac{1}{2}(1-\varepsilon)} \subset B_{\frac{1}{2}+\sqrt{2}}$. There exists $ N \in  \mathbb{N}$ and $\{z_i\}_{i=1}^{N} \subset \mathbb{R}^2$ such that  for all $0 < \eps \le \frac{1}{2}$ 
\begin{align*}
A_\eps  \subset B_{\frac{1}{2}+\sqrt{2}}  \subset  \bigcup_{i=1}^{N} B_{\frac{1}{4}}(z_i)   \subset \bigcup_{i=1}^{N} B_{\frac{1}{2}(1-\varepsilon)}(z_i)\,.
\end{align*}
 Thus, recalling \eqref{def:M}, we have
\begin{align}\label{ineq:cardannulus}
\#(V\cap A_\eps) \leq \#\left(V\cap  \bigcup_{i=1}^N B_{\frac{1}{2}(1-\varepsilon)}(z_i)\right)\leq \sum_{i=1}^N \#\left(V\cap  B_{\frac{1}{2}(1-\varepsilon)}(z_i)\right)\leq N M\,.
\end{align}
By \eqref{ineq:cardannulus}, the definition of $M$, and {($\rm i_2$)}, {($\rm ii_2$)} we have
\begin{align}\label{ineq:annulusinteraction}
\underset{\{x,y\} \in E}{\sum_{x \in B_{\frac{1}{2}(1-\varepsilon)}, y \in A_\eps}} v_2(|x-y|) \geq -1 \cdot \#\{(x,y) \colon x \in V\cap B_{\frac{1}{2}(1-\varepsilon)}, y \in V\cap A_\eps\} \geq  - NM^2\,.
\end{align}
We write $V\cap B_{\frac{1}{2}(1-\varepsilon)} = \{x_i\}_{i=1}^M$ and consider  a competitor $\hat{V}$ (with associated natural  bond graph $\hat{G}$) given by
\begin{align*}
\hat{V} = (V \setminus B_{\frac{1}{2}(1-\varepsilon)}) \cup \bigcup_{i=1}^M \{x_i +\tau_i\}\,,
\end{align*}
where $\tau_i \in \mathbb{R}^2$ are chosen such that
\begin{align}\label{ineq:choicetaui}
\mathrm{dist}(x_i+\tau_i, \hat{V}\setminus \{x_i+\tau_i\}) \geq \sqrt{2} \text{ for all } i=1,\ldots,M\,.
\end{align}
By   \eqref{ineq:choicetaui}, {($\rm ii_2$)},  and the optimality of $G$ we have
\begin{align}\label{ineq:Xmin}
F(G) \leq F(\hat{G}) &\leq F(G) - \underset{\{x,y\} \in E}{\sum_{x,y \in B_{\frac{1}{2}(1-\varepsilon)}}} v_2(|x-y|) - 2 \underset{\{x,y\} \in E}{\sum_{x \in B_{\frac{1}{2}(1-\varepsilon)}, y \in A_\eps}} v_2(|x-y|)\,.
\end{align}
 Now, using \eqref{ineq:Xmin}, \eqref{estimate:insideBall}, and \eqref{ineq:annulusinteraction}, we obtain
\begin{align*}
\frac{1}{2\varepsilon}M(M-1) \leq \underset{\{x,y\} \in E}{\sum_{x,y \in B_{\frac{1}{2}(1-\varepsilon)}}} v_2(|x-y|) \leq - 2  \underset{\{x,y\} \in E}{\sum_{x \in B_{\frac{1}{2}(1-\varepsilon)}, y \in A_\eps}} v_2(|x-y|) \leq 2  N M^2\,.
\end{align*}
For $\varepsilon>0$ small enough ($\varepsilon < \frac{1}{8N}  $ suffices), this inequality can only be true for $M=1$. This yields \eqref{ineq:1epsregular} and concludes Step 1.
\end{step} \\
\noindent \begin{step}{2} In this step we prove \BBB that \EEE all bond angles satisfy 
\begin{align}\label{incl:2epsregular}
\theta \in [\pi/2-\varepsilon,\pi/2+\varepsilon] \cup [\pi-\varepsilon,\pi+\varepsilon] \cup [3\pi/2-\varepsilon,3\pi/2+\varepsilon]\,.
\end{align}
In particular, for $\varepsilon< \frac{1}{10}\pi$, we then also have $\#\mathcal{N}(x,E) \leq 4$ for all $x \in V$ since  all bond angles at  $x \in V$ sum up to $2\pi$. To see \eqref{incl:2epsregular}, we   first of all  claim that 
\begin{align}\label{ineq:neighbourhood}
\#\mathcal{N}(x,E) \leq 4\frac{\left(\sqrt{2} +\frac{1}{2}\right)^2}{(1-\varepsilon)^2} \text{ for all } x\in V\,.
\end{align}
This follows by Step~1 and, due to {\rm ($\rm{ii}_2$)}, by the fact that $B_{\frac{1}{2}(1-\varepsilon)}(y) \subset B_{\sqrt{2}+\frac{1}{2}}(x)$ for all $y \in \mathcal{N}(x,E)$. More precisely,
\begin{align*}
\big(\sqrt{2} +\tfrac{1}{2}\big)^2\pi = | B_{\sqrt{2}+\frac{1}{2}}(x)| \geq \sum_{y \in \mathcal{N}(x,E)} |B_{\frac{1}{2}(1-\varepsilon)}(y)| \geq  \frac{1}{4}(1-\varepsilon)^2 \pi\,\# \mathcal{N}(x,E)\,,
\end{align*}
i.e., \eqref{ineq:neighbourhood} holds. Now, \eqref{incl:2epsregular} follows. In fact, if $x$ has a bond angle that does not satisfy \eqref{incl:2epsregular} we could define $\hat{V}= (V\setminus \{x\} )\cup \{x+\tau\}$ for some $\tau \in \mathbb{R}^2$ such that $\mathrm{dist}(x+\tau, \hat{V}\setminus \{x+\tau\})\geq \sqrt{2}$. Then, by  {($\rm i_2$)}, {($\rm iv_3$)}, and \eqref{ineq:neighbourhood}  we obtain a contradiction to the minimality of $G$. Summarizing, with the choice $\varepsilon_0 : = \min\{\frac{1}{10}\pi, \frac{1}{8N}\}$,  the statement holds. 
\end{step}
 \end{proof}

\section{Proof of isoperimetric inequalities in $l^1$ via slicing}\label{appendix2}

In this short excursion, we show how isoperimetric inequalities with respect to the $l^1$-perimeter can be obtained by a slicing argument similar to the one used in case (a) of the proof of Theorem~\ref{thm:crystallization}. Indeed, our main proof was inspired by such an argument.  We present it directly in any space dimension $d \ge 1$.  First, given $m \in \mathbb{N}$ and $x\in \mathbb{R}^m$ we denote by $| x|_1= \sum_{k=1}^m |x_k|$ its $l^1$-norm.  Now, for a set of finite perimeter $E \subset \R^d$, see \cite{maggi2012sets}, we introduce the \emph{$l^1$-perimeter} by 
$$P^d_{l^1}(E) = \int_{\partial^* E}   | \nu_E|_1 \, {\rm d}\mathcal{H}^{d-1},  $$
\BBB where $\partial^* E$ denotes the reduced boundary and $\nu_E$ denotes its measure theoretical outer normal. \EEE

\begin{theorem}\label{thm: slic}
For each set of finite perimeter $E \subset \R^d$, $d \ge 1$, it holds that
$$P^d_{l^1}(E) \ge 2d (\mathcal{L}^d(E))^{1-1/d}.$$
\end{theorem}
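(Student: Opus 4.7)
My plan is to prove Theorem~\ref{thm: slic} in the spirit of case (a) of the proof of Theorem~\ref{thm:crystallization}: first obtain a slice-wise lower bound on the perimeter in each coordinate direction, and then combine the resulting bounds by the arithmetic-geometric mean inequality. Writing $\nu = (\nu_1,\ldots,\nu_d)$ for the measure-theoretic outer unit normal on $\partial^* E$, the pointwise identity $|\nu|_1 = \sum_{k=1}^d |\nu_k|$ yields
$$P^d_{l^1}(E) = \sum_{k=1}^d P_k(E), \qquad \text{where } P_k(E) := \int_{\partial^* E} |\nu_k| \, \mathrm{d}\mathcal{H}^{d-1}$$
denotes the $k$-th directional perimeter.

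For each $k = 1,\ldots,d$ I would then apply the BV slicing theorem in direction $e_k$: letting $\pi_k \colon \mathbb{R}^d \to e_k^\perp$ denote the orthogonal projection and $E^{k,x'} := \{t \in \mathbb{R} \colon x' + t e_k \in E\}$ the line slice through $x' \in e_k^\perp$, one has
$$P_k(E) = \int_{e_k^\perp} \mathcal{H}^0(\partial^* E^{k, x'}) \, \mathrm{d} \mathcal{L}^{d-1}(x') \ge 2 \, \mathcal{L}^{d-1}(\pi_k(E)),$$
since every slice with positive $\mathcal{L}^1$-measure contributes at least two essential-boundary points, and $\mathcal{L}^{d-1}$-a.e.\ such slice corresponds to a point in $\pi_k(E)$. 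Combined with the classical Loomis--Whitney inequality $\mathcal{L}^d(E)^{d-1} \le \prod_{k=1}^d \mathcal{L}^{d-1}(\pi_k(E))$, this gives $\prod_k P_k(E) \ge 2^d \mathcal{L}^d(E)^{d-1}$, and AM-GM closes the argument:
$$P^d_{l^1}(E) = \sum_{k=1}^d P_k(E) \ge d \Big( \prod_{k=1}^d P_k(E) \Big)^{1/d} \ge 2d \, \mathcal{L}^d(E)^{1-1/d}.$$
For $d=2$ this is precisely the continuum analogue of the inequality $l(s_0) + l^v \ge 2\sqrt{l(s_0) \, l^v} \ge 2\sqrt{n}$ used in \eqref{eq: slice}, with $\mathcal{L}^1(\pi_1(E))$ and $\mathcal{L}^1(\pi_2(E))$ playing the roles of the ``span lengths'' $l^v$ and $l(s_0)$.

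An equivalent, more self-contained route avoids invoking Loomis--Whitney and proceeds by direct induction on $d$. The base case $d=1$ is immediate since any nonempty set of finite perimeter and finite measure in $\mathbb{R}$ has at least two essential-boundary points. For the inductive step, I would slice $E$ horizontally, set $V(t) := \mathcal{L}^{d-1}(E_t)$, apply the inductive hypothesis on each slice to bound the ``horizontal'' part of the perimeter by $2(d-1) \int V(t)^{(d-2)/(d-1)} \, \mathrm{d}t$, control the ``vertical'' part by $2 \sup_t V(t)$, and combine via H\"older (to relate $\int V^{(d-2)/(d-1)} \mathrm{d}t$ to $\int V \, \mathrm{d}t$ and $\sup_t V$) and a weighted AM-GM with $d-1$ copies and one copy, respectively. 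In either approach the substantive obstacle is not the algebra but the rigorous justification of the slicing identity for $P_k(E)$ in the BV framework, including the coincidence up to $\mathcal{H}^{d-1}$-null sets of ``slice of the essential boundary'' and ``essential boundary of the slice''; this is classical (see \cite{maggi2012sets}) but should be invoked carefully.
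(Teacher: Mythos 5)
Your second, ``self-contained'' route is essentially the paper's own proof: the paper isolates exactly the slicing inequality $P^d_{l^1}(E) \ge \int_{\R} P^{d-1}_{l^1}(E_t)\,\mathrm{d}t + 2\sup_t \mathcal{H}^{d-1}(E_t)$ as Lemma~\ref{lem:symmetrization} (proved via the coarea formula for the horizontal part and one-dimensional BV slicing for the vertical part), and then closes the induction not by H\"older but by the cruder pointwise bound $V(t)^{(d-2)/(d-1)} \ge M^{-1/(d-1)} V(t)$ with $M = \sup_t V(t)$, followed by minimizing $2(d-1)M^{-1/(d-1)}\mathcal{L}^d(E) + 2M$ over $M$; your weighted AM--GM is the same computation in different clothing. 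Your first route is genuinely different and also correct: it treats the $d$ coordinate directions symmetrically, bounds each directional perimeter $P_k(E)$ from below by twice the measure of the ($k$-th essential) projection via one-dimensional slicing, and then delegates the dimensional bookkeeping to Loomis--Whitney plus AM--GM. What this buys is symmetry and brevity (no induction, no optimization over $M$); what it costs is the external input of Loomis--Whitney, which is itself usually proved by an induction of comparable weight, so the paper's version is the more self-contained of the two. The only point in your first route deserving care beyond what you already flag is that the slicing identity controls the \emph{essential} projection $\{x' : \mathcal{L}^1(E^{k,x'})>0\}$ rather than the set-theoretic one, so Loomis--Whitney must be applied with essential projections (which is legitimate, since $E$ is contained in the intersection of their preimages up to a null set); also, both routes implicitly require $\mathcal{L}^d(E)<\infty$ so that a.e.\ nontrivial line slice has at least two essential boundary points.
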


A proof of this result via slicing hinges on the following lemma, for which we use the notation  
\begin{equation}
E_t = E \cap \{ (x',t)\colon x' \in \mathbb{R}^{d-1} \} \quad \quad \text{for all $t  \in \R$}\,.
\end{equation}

\begin{lemma}\label{lem:symmetrization}
Suppose that $E \subset \R^d$ is a bounded set of finite perimeter. Then,
\begin{equation}
P^d_{l^1}(E) \geq \int_{\mathbb{R}} P^{d-1}_{l^1}(E_t) \, {\rm d}t + 2 \sup_{t \in \R} \mathcal{H}^{d-1}(E_t).
\end{equation}
\end{lemma}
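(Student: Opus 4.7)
\textbf{Proof strategy for Lemma \ref{lem:symmetrization}.} The plan is to decompose the $l^1$-perimeter into a ``horizontal'' contribution (from the first $d-1$ coordinate directions) and a ``vertical'' contribution (from the $d$-th direction), and then handle each piece separately via Fubini-type slicing for $BV$ functions. Writing $\nu_E = (\nu_E^1,\ldots,\nu_E^d)$, we have
\begin{equation*}
P^d_{l^1}(E) \;=\; \sum_{k=1}^{d-1} \int_{\partial^* E} |\nu_E^k|\, {\rm d}\mathcal{H}^{d-1} \;+\; \int_{\partial^* E} |\nu_E^d|\, {\rm d}\mathcal{H}^{d-1} \;=\; \sum_{k=1}^{d-1} |D_k \chi_E|(\R^d) \;+\; |D_d \chi_E|(\R^d).
\end{equation*}

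For the horizontal piece, I would invoke the standard slicing theorem for $BV$ functions (Fubini for perimeters): for each $k \in \{1,\ldots,d-1\}$, slicing the function $\chi_E$ along hyperplanes $\{x_d = t\}$ gives
\begin{equation*}
|D_k \chi_E|(\R^d) \;=\; \int_{\R} |D_k \chi_{E_t}|(\R^{d-1})\, {\rm d}t,
\end{equation*}
since $D_k$ is a derivative purely in a horizontal direction and therefore commutes with horizontal slicing. Summing over $k=1,\ldots,d-1$ yields $\sum_{k=1}^{d-1}|D_k\chi_E|(\R^d) = \int_\R P^{d-1}_{l^1}(E_t)\,{\rm d}t$, which is the first term on the right-hand side of the claim.

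For the vertical piece, set $f(t) := \mathcal{H}^{d-1}(E_t)$. Testing $D_d \chi_E$ against functions of the form $\varphi(x_d)$ with $\|\varphi\|_\infty \leq 1$ and using Fubini, one obtains that $f \in BV(\R)$ with $|Df|(\R) \leq |D_d \chi_E|(\R^d)$. Since $E$ is bounded, $f$ has compact support; since $f \geq 0$, for any $t_0$ attaining (or approximating) $\sup_{t} f(t)$ the total variation of $f$ on the half-lines $(-\infty,t_0]$ and $[t_0,\infty)$ is at least $f(t_0)$ each, giving $|Df|(\R) \geq 2\sup_t f(t)$, hence $|D_d \chi_E|(\R^d) \geq 2\sup_t \mathcal{H}^{d-1}(E_t)$.

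The main obstacle, though mostly technical, is justifying the $BV$ slicing identity for $D_k \chi_E$ at the required level of rigor, and handling the supremum carefully (passing through essential supremum or approximating by a sequence $t_n$ with $f(t_n) \to \sup f$), which is routine once the $BV$ framework of \cite{maggi2012sets} is in place. Combining the three pieces produces exactly the stated inequality.
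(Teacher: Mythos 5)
Your proposal is correct and starts from the same decomposition as the paper --- splitting $P^d_{l^1}(E)$ into the contribution of the first $d-1$ components of $\nu_E$ and that of $(\nu_E)_d$ --- but implements the two halves with different tools. For the horizontal half, the paper applies the coarea formula on the rectifiable set $\partial^* E$ with the height function $x_d$, introducing the weight $\bar g = |\nu'_E|_1/\sqrt{1-(\nu_E)_d^2}$ and identifying $\nu'_E/\sqrt{1-(\nu_E)_d^2}$ as the unit normal of $E_t$, whereas you use the anisotropic identity $P^d_{l^1}(E)=\sum_{k}|D_k\chi_E|(\R^d)$ together with coordinate-wise $BV$ slicing; the two routes are essentially equivalent in content (both ultimately rest on a Vol'pert-type statement that $\partial^*(E_t)$ coincides with the slice of $\partial^* E$ for a.e.\ $t$), and yours needs the extra observation that slicing in the direction $e_d$ commutes with differentiation in an orthogonal direction $e_k$, which indeed follows from the one-dimensional slicing theorem applied twice. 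For the vertical half the arguments genuinely differ: the paper slices $\partial^* E$ along vertical lines and uses that a.e.\ vertical line through a point of $E_t$ meets $\partial^* E$ at least twice, while you push $D_d\chi_E$ forward to the one-variable function $f(t)=\mathcal{H}^{d-1}(E_t)$, show $f\in BV(\R)$ with $|Df|(\R)\le |D_d\chi_E|(\R^d)$, and invoke the elementary one-dimensional bound $|Df|(\R)\ge 2\sup f$ for nonnegative compactly supported $f$; this is a clean and correct alternative that avoids discussing the fine structure of $\partial^*E$ along lines. One caveat, which you rightly flag and which is equally present in the paper's own argument: both proofs really control only the essential supremum of $t\mapsto \mathcal{H}^{d-1}(E_t)$ rather than the pointwise supremum for an arbitrary representative of $E$; this is harmless for the application in Theorem~\ref{thm: slic}, where only an a.e.\ upper bound on $\mathcal{H}^{d-1}(E_t)$ enters.
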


We postpone the proof of the lemma to the end.

\begin{proof}[Proof of Theorem \ref{thm: slic}]
We prove the statement by induction. The case $d=1$ is clear as each set $E \subset \R$ with finite volume satisfies $P^1_{l^1}(E) \ge 2$. Suppose that the statement holds for $d-1$ and consider $E \subset \R^d$. Then, by Lemma \ref{lem:symmetrization}  and the induction hypothesis we have 
\begin{align*}
P^d_{l^1}(E) &\geq \int_{\mathbb{R}} P^{d-1}_{l^1}(E_t) \, {\rm d}t + 2 \sup_{t \in \R} \mathcal{H}^{d-1}(E_t) \\
& \geq \int_{\mathbb{R}}  2(d-1) \big(\mathcal{H}^{d-1}(E_t)\big)^{1-1/(d-1)} \, {\rm d}t + 2 \sup_{t \in \R} \mathcal{H}^{d-1}(E_t) \,.
\end{align*}
Using the shorthand $M:=  \sup_{t \in \R} \mathcal{H}^{d-1}(E_t)$ and integrating over the slices $E_t$ we get
\begin{align*}
P^d_{l^1}(E) \ge \int_{\mathbb{R}}  2(d-1)  M^{-1/(d-1)} \mathcal{H}^{d-1}(E_t)   \, {\rm d}t + 2 M = 2(d-1)  M^{-1/(d-1)} \mathcal{L}^d(E)  + 2 M\,.
\end{align*}
By optimizing with   respect to $M$ we get $M =  (\mathcal{L}^d(E))^{1 - 1/d}$, and  thus we conclude    
$$P^d_{l^1}(E) \ge 2d (\mathcal{L}^d(E))^{1-1/d}\,. $$
\end{proof}

\begin{proof}[Proof of Lemma \ref{lem:symmetrization}]
We start by  splitting the $l^1$-perimeter into 
\begin{equation}\label{eq:estimateonperimeter}
P_{l^1}(E) = \int_{\partial^*E} |\nu_E|_1 \, {\rm d}\mathcal{H}^{d-1} = \int_{\partial^*E} (|\nu'_E|_1 + |(\nu_E)_d|) \, {\rm d}\mathcal{H}^{d-1},
\end{equation}
where $\nu'_E = ((\nu_E)_1, \ldots, (\nu_E)_{d-1})\in \R^{d-1}$. Introducing the function
\begin{equation*}
\bar{g} = \frac{|\nu'_E|_1}{\sqrt{1 - |(\nu_E)_d|^2}} \quad \text{ on $\partial^* E$}\,,
\end{equation*}
the coarea formula, see \cite[(18.25)]{maggi2012sets},  implies 
\begin{equation}\label{eq:estimateofg1}
\int_{\partial^* E} |\nu'_E|_1 \, {\rm d}\mathcal{H}^{d-1} = \int_{\partial^* E} \bar{g} \, \sqrt{1 - (\nu_E)_d^2} \, {\rm d}\mathcal{H}^{d-1} = \int_{\mathbb{R}} \int_{(\partial^* E)_t} \bar{g} \, {\rm d}\mathcal{H}^{d-2}  {\rm d} t = \int_\mathbb{R} P^{d-1}_{l^1}(E_t) \, {\rm d}t,
\end{equation}
where in the last step we used the fact that 
\BBB $(1 - (\nu_E)_d^2)^{-1/2} \nu_E' \in \R^{d-1}$ \EEE is a unit normal to $E_t$. On the other hand, using the notation $ (\partial^* {E})^{x'} :=   \partial^*E  \cap \{ (x',t): t \in \mathbb{R}\}$ for $x' \in \R^{d-1}$, by slicing properties of $BV$-functions, we obtain
\begin{equation}\label{eq:estimateofg2}
\int_{\partial^* E} |(\nu_E)_d| \, {\rm d}\mathcal{H}^{d-1} = \int_{\R^{d-1}}  \mathcal{H}^0\big( (\partial^* {E})^{x'} \big) \, {\rm d}\mathcal{H}^{d-1}(x') \geq 2 \sup_{t \in \R} \mathcal{H}^{d-1}(E_t),
\end{equation}
where we used that for  $\mathcal{H}^1$-a.e.\ $t \in \R$ and  $\mathcal{H}^{d-1}$-a.e.\  $x'$ with  $(x',t) \in E_t$ we have  $ \mathcal{H}^0( (\partial^* {E})^{x'} ) \ge 2$. Combining the two estimates \eqref{eq:estimateofg1}--\eqref{eq:estimateofg2}, the desired results follows from  \eqref{eq:estimateonperimeter}.
\end{proof}



\begin{thebibliography}{99}
%




\bibitem{Molecular}
 {\sc   N.L. Allinger}.
\newblock {\em Molecular structure: understanding steric and electronic effects from molecular mechanics}.
\newblock John Wiley \& Sons
\newblock (2010).

 

 


\bibitem{Yuen}
 {\sc   Y.~Au Yeung, G.~Friesecke, B.~Schmidt}.
\newblock {\em Minimizing atomic configurations of short range
pair potentials in two dimensions: crystallization in the Wulff-shape}.
\newblock Calc.\ Var.\ Partial Differential Equations
\newblock {\bf 44} (2012), 81--100.

 

\bibitem{Betermin0}
{\sc L.~B\'etermin, L.~De Luca, M.~Petrache}.
\newblock  {\em Crystallization to the square lattice for a two-body potential}.  
\newblock Arch.\ Ration.\ Mech.\ Anal.\
\newblock {\bf 240} (2021), 987--1053.



 
 
\bibitem{Betermin}
{\sc L.~B\'etermin, H.~Kn\"upfer, F.~Nolte}.
\newblock  {\em Crystallization of  one-dimensional alternating two-component systems}.  
\newblock J.\ Stat.\ Phys.\
\newblock {\bf 2} (2020),  803--815.


 
  

\bibitem{Blanc}
{\sc X.~Blanc, M.~Lewin}.
\newblock {\em The crystallization conjecture: a review}.
\newblock EMS Surv.\ Math.\ Sci.\  
\newblock {\bf 2} (2015), 255--306.


  
    \bibitem{Bollobas}
{\sc B.~Bollobas, I.~Leader}. 
\newblock {\em Edge-isoperimetric inequalities in the grid}. 
\newblock  Combinatorica
\newblock  {\bf 11} (1991), 299--314.

  
  

 
\bibitem{B43}
{\sc D.~C.~Brydges, P.~A.~Martin}. 
\newblock {\em Coulomb systems at low   density: A review}.
\newblock J.\ Stat.\ Phys.\
\newblock {\bf 96} (1999),  1163--1330.

  



 
\bibitem{cicalese}
 {\sc M.~Cicalese, G.~P.~Leonardi}. {\em Maximal fluctuations on periodic lattices: an approach via quantitative Wulff inequalities}.
 \newblock  Comm.\  Math.\ Phys. 
\newblock {\bf 375} (2020),  1931--1944.


    




     

\bibitem{Davoli15}
{\sc E.~Davoli, P.~Piovano, U.~Stefanelli}. 
\newblock {\em Wulff shape emergence in graphene}. 
\newblock Math.\ Models Methods Appl.\ Sci.\ (M3AS)
\newblock {\bf 26} (2016), 12:2277--2310.


\bibitem{Davoli16}
{\sc E.~Davoli, P.~Piovano, U.~Stefanelli}. 
\newblock {\em Sharp $n^{3/4}$ 
 law for the minimizers of the edge-isoperimetric problem in the
 triangular lattice}. 
\newblock J.\  Nonlin.\ Sci.\
\newblock  {\bf 27} (2017),  627--660.





\bibitem{Duncan0}
{\sc P.~Duncan, R.~O'Dwyer, E.~B.~Procaccia}.
{\em An elementary proof for the double bubble problem in $\ell^1$ norm}.
\BBB J.\ Geom.\ Anal.\  {\bf 33} (2021),  Paper No. 31. \EEE
 
   

\bibitem{Duncan}
{\sc P.~Duncan, R.~O'Dwyer, E.~B.~Procaccia}.
{\em Discrete $\ell^1$ Double Bubble solution is at most ceiling +2 of the
continuous solution}. \BBB
Discrete Comput.\ Geom.\ (2023), doi.org/10.1007/s00454-023-00501-4. \EEE




 


\bibitem{Lucia}
{\sc L.~De Luca, G.~Friesecke}. 
\newblock {\em  Crystallization in two dimensions and a discrete Gauss–Bonnet Theorem}. 
\newblock 
J.\ Nonlinear Sci.\
\newblock {\bf 28} (2017), 69--90.



 
 
 \bibitem{ELi}
 {\sc   W.~E, D.~Li}.
\newblock {\em  On the crystallization of 2D hexagonal lattices}.
\newblock Comm.\ Math.\ Phys.\
\newblock {\bf 286} (2009), 1099--1140.



 

\bibitem{Smereka15}
{\sc B.~Farmer, S.~Esedo\={g}lu, P.~Smereka}. 
\newblock {\em Crystallization for a
Brenner-like potential}.  
\newblock  Comm.\ Math.\ Phys.\
\newblock {\bf 349} (2017),  1029--1061. 
  

 
  

\bibitem{Figalli-Maggi-Pratelli}
 {\sc A.~Figalli, F.~Maggi, A.~Pratelli}. 
\newblock {\em A mass transportation approach to quantitative isoperimetric inequalities}. 
\newblock Inventiones mathematicae
\newblock {\bf 182} (2010),167--211.
  
 
 
 
 \bibitem{Flateley1}
{\sc L.~Flatley, M.~Taylor, A.~Tarasov, F.~Theil}. 
\newblock {\em Packing twelve spherical caps to maximize tangencies}. 
\newblock J.\ Comput.\ Appl.\ Math.\
\newblock {\bf 254} (2013), 220--225.



 \bibitem{Flateley2}
{\sc L.~Flatley, F.~Theil}. 
\newblock {\em Face-centered cubic crystallization of atomistic configurations}. 
\newblock Arch.\ Ration.\ Mech.\ Anal.\
\newblock {\bf 218} (2015), 363--416.



  

 \bibitem{gorny}
{\sc M.~Friedrich, W.~G\'orny, U.~Stefanelli}.
{\em The double-bubble problem on the square lattice}.
\BBB 	Interfaces Free Bound., to appear. \EEE arXiv:2109.01697.



 


 

 \bibitem{FriedrichKreutzHexagonal}
 {\sc  M.~Friedrich, L.~Kreutz}.
 \newblock {\em Crystallization in the hexagonal lattice for ionic dimers}.
\newblock  Math.\ Models Methods Appl.\ Sci.\ (M3AS)
\newblock {\bf 29} (2019), 1853--1900. 


 



 \bibitem{FriedrichKreutzSquare}
 {\sc  M.~Friedrich, L.~Kreutz}.
 \newblock {\em Finite crystallization and Wulff shape emergence for ionic compounds in the square lattice}.
 \newblock Nonlinearity
\newblock  {\bf 33} (2020), 1240--1296.





 \bibitem{FriedrichKreutzSchmidt}
 {\sc  M.~Friedrich, L.~Kreutz, B.~Schmidt}.
 \newblock {\em Emergence of rigid polycrystals from atomistic systems with Heitmann-Radin sticky disk energy}.
 \newblock Arch.\ Ration.\ Mech.\ Anal.\
\newblock  {\bf 240} (2021), 627--698.



    \bibitem{FriedrichStefanelli}
{\sc M.~Friedrich, U.~Stefanelli}. 
\newblock {\em Crystallization in a one-dimensional periodic landscape}. 
\newblock J.\ Stat.\ Phys.\ (2020), 
\newblock  {\bf 179} (2020),  485--501.

  




\bibitem{Friesecke-Theil15}
{\sc G.~Friesecke, F.~Theil}. 
\newblock {\em Molecular geometry optimization,
  models}. In the Encyclopedia of Applied and Computational Mathematics,
B. Engquist (Ed.), Springer, 2015.











 



 \bibitem{Gardner}
{\sc C.~S.~Gardner, C.~Radin}. 
\newblock {\em The infinite-volume ground state of the Lennard-Jones potential}. 
\newblock J.\ Stat.\ Phys.\
\newblock {\bf 20} (1979), 719--724.



   
\bibitem{Hamrick}
{\sc G.~C.~Hamrick, C.~Radin}. 
\newblock {\em The symmetry of ground states under perturbation}. 
\newblock J.\ Stat.\ Phys.\
\newblock {\bf 21} (1979), 601--607.
  

  \bibitem{Harary}
{\sc F.~Harary, H.~Harborth}. 
\newblock {\em Extremal animals}. 
\newblock J.\ Combin.\ Inform.\ and System Sci.\
\newblock  {\bf 1} (1976), 1--8.



\bibitem{Harborth}
{\sc H. Harborth}. {\em Solution to Problem 664A}. Elem.\ Math.\ {\bf 29} (1974), 14--15.



\bibitem{HR}
{\sc  R.~Heitman, C.~Radin}. 
\newblock {\em Ground states for sticky disks}. 
\newblock J.\ Stat.\ Phys.\
\newblock {\bf 22} (1980), 281--287. 


 



%
%
%
%


 

\bibitem{Lewars}
{\sc E.~G.~Lewars}.
\newblock {\em Computational Chemistry}. 2nd edition, Springer, 2011.

 

\bibitem{maggi2012sets}
\newblock {\sc F.~Maggi}.
\newblock Sets of finite perimeter and geometric variational problems: an introduction to Geometric Measure Theory
\newblock 135, Cambridge University Press, 2012.




\bibitem{Mainini-Piovano-schmidt}
{\sc E.~Mainini, P.~Piovano, B.~Schmidt, U.~Stefanelli}. 
\newblock {\em $N^{3/4}$  law in the cubic lattice}. 
\newblock J.\ Stat.\ Phys.\
\newblock {\bf 176} (2019), 1480--1499.


 
 

\bibitem{Mainini-Piovano}
{\sc E.~Mainini, P.~Piovano, U.~Stefanelli}. 
\newblock {\em Finite crystallization in the square lattice}. 
\newblock Nonlinearity
\newblock {\bf 27} (2014), 717--737.
 
 
 
 \bibitem{edo-bernd}
 {\sc  E.~Mainini, B.~Schmidt}. {\em 
Maximal Fluctuations Around the Wulff Shape for Edge-Isoperimetric Sets in $\Z^d$: A Sharp Scaling Law}.
 \newblock  Comm.\ Math.\ Phys. 
\newblock {\bf 380} (2020),  947--971.


 
\bibitem{Mainini}
{\sc E.~Mainini, U.~Stefanelli}. 
\newblock {\em Crystallization in carbon nanostructures}. 
\newblock Comm.\ Math.\ Phys.\
\newblock {\bf 328} (2014), 545--571. 

%
%
%
%
%
%
%
%





 
  

 
\bibitem{Radin}
{\sc  C.~Radin}. 
\newblock {\em The ground state for soft disks}. 
\newblock J.\ Stat.\ Phys.\
\newblock {\bf 26} (1981), 365--373. 




  \bibitem{Radi}
{\sc C.~Radin}. 
\newblock {\em Classical ground states in one dimension}. 
\newblock  J.\ Stat.\ Phys.\
\newblock {\bf 35} (1983), 109--117.
  

 

 

 
 \bibitem{B195}
{\sc C.~Radin}. 
\newblock {\it Crystals and quasicrystals: a continuum
  model}. 
  \newblock Comm.\ Math.\ Phys.\
  \newblock {\bf 105} (1986),  385--390.
  


 
 
 

 

 
 
\bibitem{Schmidt-disk}
{\sc  B.~Schmidt}. 
\newblock {\em Ground states of the 2D sticky disc model: fine properties and $N^{3/4}$ law for the
deviation from the asymptotic Wulff-shape}. 
\newblock J.\ Stat.\ Phys.\
\newblock {\bf 153} (2013), 727--738.

 
 \bibitem{Theil}
{\sc   F.~Theil}. 
\newblock {\em A proof of crystallization in two dimensions}. 
\newblock Comm.\ Math.\ Phys.\ 
\newblock {\bf 262} (2006), 209--236.

 
 
  
  \bibitem{Ventevogel}
{\sc W.~J.~Ventevogel, B.~R.~A.~Nijboer}. 
\newblock {\em On the configuration of systems of interacting atom with
minimum potential energy per  particle}. 
\newblock  Phys.\ A
\newblock {\bf 98} (1979), 274--288.\EEE



 


\bibitem{Wagner83} {\sc H. J. Wagner}. {\em Crystallinity in two dimensions: a note on a paper of C.~Radin}.  J.\ Stat.\ Phys.\   {\bf 33},  (1983),  523--526.









\end{thebibliography}
\end{document}